\newtheorem{thm}{Theorem}[section]
\newtheorem{lem}[thm]{Lemma}
\newtheorem{prop}[thm]{Proposition}
\newtheorem{cor}[thm]{Corollary}
\theoremstyle{definition}
\newtheorem{definition}{Definition}[section]
    \newtheoremstyle{TheoremNum}
        {\topsep}{\topsep}              
        {\itshape}                      
        {}                              
        {\bfseries}                     
        {.}                             
        { }                             
        {\thmname{#1}\thmnote{ \bfseries #3}}
    \theoremstyle{TheoremNum}
    \newtheorem{thmn}{Theorem}
		    \newtheoremstyle{TheoremNum}
        {\topsep}{\topsep}              
        {\itshape}                      
        {}                              
        {\bfseries}                     
        {.}                             
        { }                             
        {\thmname{#1}\thmnote{ \bfseries #3}}
    \theoremstyle{TheoremNum}
    \newtheorem{propn}{Proposition}
\newcommand{\RR}{\mathbb{R}}      
\newcommand{\NN}{\mathbb{N}}      
\newcommand{\PP}{\mathbb{P}}      
\newcommand{\sgn}{\operatorname{sgn}}
\newcommand{\Dis}{\operatorname{Dis}}
\title{Learning Time Dependent Choice}
\author{Zachary Chase\thanks{%
California Institute of Technology,
{\sl zchase@caltech.edu}
} \and Siddharth Prasad\thanks{%
California Institute of Technology,
{\sl sprasad@caltech.edu}
}}
\begin{document}

\maketitle

\begin{abstract}
We explore questions dealing with the learnability of models of choice over time. We present a large class of preference models defined by a structural criterion for which we are able to obtain an exponential improvement over previously known learning bounds for more general preference models. This in particular implies that the three most important discounted utility models of intertemporal choice -- exponential, hyperbolic, and quasi-hyperbolic discounting -- are learnable in the PAC setting with VC dimension that grows logarithmically in the number of time periods. We also examine these models in the framework of active learning. We find that the commonly studied stream-based setting is in general difficult to analyze for preference models, but we provide a redeeming situation in which the learner can indeed improve upon the guarantees provided by PAC learning. In contrast to the stream-based setting, we show that if the learner is given full power over the data he learns from -- in the form of learning via membership queries -- even very naive algorithms significantly outperform the guarantees provided by higher level active learning algorithms.
\end{abstract}

\section{Introduction}

We study the learnability of economic models of choice over time. Our setting is that of an analyst who first observes an agent's choices between plans that specify payoffs over time, and then attempts to learn the preference parameters guiding the choices. While such parameters are stylized -- in reality subjects are not likely to perform standardized computations according to private parameters before making decisions -- experiments have shown that they often provide accurate descriptions of how an agent behaves. By observing enough choice data, one can hope to learn the economic parameters that most closely describe the agent's preferences. Thus, learning theory provides an especially meaningful lens with which to view the theory of choice -- it allows us to answer questions regarding the volume of data required to faithfully predict future decisions made by an observed agent. The overarching goal of this paper is to identify structural criteria that yield strong learnability results for preferences over time under different restrictions placed on the learner/analyst. The criteria we present captures a large class of preference models that give the agent significant freedom in weighting decisions against time delays. In particular, it encompasses the most popular models of time dependent choice used by economists. 

The main economic application of our results is in understanding the learnability of models of intertemporal choice. Intertemporal choice is what governs an agent's decisions over several time periods. The most important models of intertemporal choice are discounted utility models, in which agents evaluate plans by discounting actions as they are delayed -- in analogy to how markets value the loss or gain of money over time. The first axiomatic treatment of discounting was by Koopmans in 1960 \cite{KOOPMANS}, in which he demonstrates that simple postulates for preferences over an infinite time horizon yield ``impatience." The three most commonly studied discounting models are \emph{exponential}, \emph{hyperbolic}, and \emph{quasi-hyperbolic}, and all have been studied by both economists and computer scientists (though less so by the latter) as well as researchers from various other fields. The importance of discounted utility in economics cannot be overstated -- it is the canonical framework used by economists to study choice over time. 

Problems of learning economic parameters have received recent attention from computer scientists; see, e.g., \cite{BALCAN, FEDER, BEIGMAN, KALAI, ZADIM}. Inspired by a general theme of demanding computational robustness from economic models (Echenique, Golovin, and Wierman provide a nice discussion of this topic in \cite{ECHENIQUE}), the tools of learning theory provide relevant and exciting perspectives from which to view economic models that have been around for several decades. In contrast to the usual goal of truthfully extracting the agent's parameters adopted by classical mechanism design, the learning problem aims to efficiently extract a truthful agent's parameters in the restricted message space of binary classification. Our paper contributes to the line of work that specifically studies models of choice using the perspectives of learning theory. This confluence of decision theory and learning theory was initiated by Basu and Echenique \cite{FEDER}, who consider the learning problem for models of choice under uncertainty. Our investigation in this paper is motivated by models of how agents make choices over time. We provide learnability results that are fine tuned to structural requirements on such models.

We now summarize our main contributions at a high level. Section 3 contains a more detailed exposition of our results.

\subsubsection*{Summary of results and techniques}
Our situation is one of an analyst trying to learn the parameters governing an agent's preferences over time. The two main learning themes we consider are (1) when the analyst has no control over the data he sees and (2) when the analyst has some control over the data he sees. The first theme is aptly captured by \emph{probably approximately correct (PAC) learning}. To analyze the second theme, we investigate two models of \emph{active learning}: stream-based selective sampling and membership queries.

In the first part, we study the PAC model, where the analyst is presented with pairs of alternatives and a label for each pair indicating the agent's preference between the alternatives. The data points are drawn according to some unknown distribution, and the analyst has no control over the data he is presented with. Our main result here is a structural criterion on preference models that allows for a drastic improvement over the PAC learning complexity bounds achieved in \cite{FEDER}. We stipulate that the agent weights time-delayed payoffs according to polynomials, which allows for considerable freedom in how payoffs are weighted. Under this requirement, we show that such classes of preference models admit an exponential improvement in sample complexity bounds over the more general preference models considered in \cite{FEDER}. This is achieved via a computation of the VC dimension (which quantifies the complexity of PAC learning). A simple application of our result shows that each of the discounted utility models are learnable, with sample complexity that grows logarithmically in the number of time periods $T$ over which decisions are being made. The computation of the VC dimension is due to a natural connection between pairs of choices and the signs of polynomials that arise from the choices.

In the second part, we consider active learning models, where the analyst is given a certain amount of control over the data that he uses to learn. The two active learning models we study are \emph{stream-based selective sampling} and learning via \emph{membership queries}. In the former, the analyst is given some control over what data he learns from: as in the PAC setting he is presented with points drawn from an unknown distribution, but now the analyst chooses whether or not to see the label representing the agent's choice for each point. In the latter, the analyst has complete control over the data he learns from: the analyst can at any time request the label for any point. The former model seems to have been commonly adopted in order to study the very general problem of concept learning, when there is no extra information about the structure of the concepts. We find that the \emph{disagreement methods} used to study the stream-based setting are in general difficult to analyze in the context of preference models -- requiring quantitative information about the underlying distribution from which points are drawn. However, we provide a redeeming situation (by examining a particular distribution) where we obtain an improvement over the PAC guarantees. Membership queries, on the other hand, allow us to heavily exploit the structure of the preference models we consider. We present a naive membership query-based algorithm that significantly outperforms the guarantees provided in the stream-based setting. Learning via membership queries, we conclude, seems to be the appropriate model to actively learn economic parameters. It allows the analyst to make use of the preference relations' structure, and also precisely captures the situation in which the analyst and agent are participating in a real time experiment. 

\subsubsection*{Related work}

Discounted utility models of intertemporal choice have been studied extensively not only by economists, but also by researchers from various other fields. We first briefly survey some of the relevant work pertaining to the exponential, quasi-hyperbolic, and hyperbolic discounting models and then survey existing work in the more general topic of learning economic parameters.

In the exponential discounting model, the agent evaluates his utilities based on a discount factor $\delta\in (0, 1)$, where a delay of $t$ time periods incurs an exponential discount in utility by $\delta^t$. Climate change policies are traditionally evaluated according to an exponential discounting model -- for example, the \emph{Stern review} on the economics of climate change deals with issues of how to choose an appropriate discount rate in evaluating such policies \cite{STERN}. Chambers and Echenique \cite{CHAMBERS} present results related to the problem of aggregating discount rates proposed by a group of experts facing disagreement. While it is the most commonly used discounting model due to its simplicity, the exponential discounting model has been criticized due to its inability to match empirical data recording actual human behavior. Quasi-hyperbolic and hyperbolic discounting aim to mend such issues. The quasi-hyperbolic discounting model is parametrized by $\beta, \delta\in (0, 1)$, where a delay of $t$ time periods incurs a discount in utility by $\beta\delta^t$, and was first introduced by Phelps and Pollack \cite{PHELPS} to study preferences over generations. They proposed that the constant $\beta$ discount factor represents how much a given generation $t$ is affected by the utilities of other people relative to their own -- and remark that $\beta=1$ represents ``perfect altruism," while $\beta < 1$ represents ``imperfect altruism." Kleinberg and Oren \cite{KLEINBERG} study agents with quasi-hyperbolic discounting and propose a graph-theoretic model to investigate phenomena such as procrastination and abandonment of long-range tasks. Hyperbolic discounting aims to capture the notion that people are more impatient in making short term decisions (today vs. tomorrow) than long term decisions (365 days from today vs. 366 days from today)\footnote{In particular note that exponential discounting does not capture this issue, i.e. it is \emph{dynamically consistent}, in that preferences do not change according to shifts in time.}, and is modeled via a discount of $(1+t\alpha)^{-1}$ at time $t$. Researchers in fields such as psychology and neuroscience \cite{BERNS, KABLE} have adopted the hyperbolic discounting model to study, for example, issues of self control and anticipation in humans and animals, and have compared the predictions by the different discounted utility models to neurobiological data obtained via MRI scans. Chabris et al. \cite{CHABRIS} give an exposition of the discounted utility models of intertemporal choice and survey sociological research that examines empirical data pertaining to how discount rates are affected by factors like age, drug use, gambling, etc.

The study of economic models has witnessed a recent influx of work from computer scientists dealing with questions of robustness under various notions of complexity (learning complexity, computational complexity, communication complexity, etc.). Kalai \cite{KALAI} in 2001 studied the learnability of choice functions, where the observed choices are in the form of a given set of alternatives along with the most preferred alternative from the set. Beigman and Vohra \cite{BEIGMAN}, Zadimoghaddam and Roth \cite{ZADIM}, and Balcan et al. \cite{BALCAN} investigate the problem of learning utility functions in the context of an expected utility maximizing agent in a demand environment. Most recently (and most related to our work), Basu and Echenique \cite{FEDER} study the learnability of preference models of choice under uncertainty, in which an agent is uncertain about states of a lottery and is made to choose between acts that encode utilities over each state. Here, the different models of choice under uncertainty arise from different ways of representing the subjective probability held by an agent. They are also the first to study learnability in the decision-theoretic setting where choice is modeled by preference relations rather than by expected utility maximizing behavior in a demand setting. However, it does not appear that the learnability of models of intertemporal choice has been previously studied.

\section{Model and Preliminaries}

We now formally set up the discounted utility models of intertemporal choice and state the standard definitions from learning theory in the context of preference relations. Much of the following material regarding learning and preference relations is taken from \cite{FEDER} since we require a similar list of definitions and setup. First, we sketch our high level model. 

Let $X$ be a Euclidean space equipped with a Borel $\sigma$-algebra. A \emph{preference relation} on $X$ is a binary relation $\succsim\subseteq X\times X$ such that $\succsim$ is measurable with respect to the product $\sigma$-algebra on $X\times X$. A \emph{model} $\mathcal{P}$ of preference relations is a collection of preference relations.

An agent makes choices from pairs of alternatives $(x^i, y^i)_{i=1}^n$ that are drawn according to some unknown distribution on $X\times X$. The choices are presented as labels $(a_i)_{i=1}^n$ where $a_i = 1$ if the agent chooses $x^i$ and $a_i = 0$ if the agent chooses $y^i$. A \emph{dataset} is any finite sequence of pairs of plans and their labels $(((x^1, y^1), a_1), \ldots, ((x^n, y^n), a_n))$. An analyst observes a dataset, and attempts to guess the preference relation governing the agent's choices. A \emph{learning rule} is any map $\sigma$ from datasets to preference relations. The output of the learning rule is the analyst's hypothesis as to what the agent's true preference relation is, having seen some finite dataset.

\subsection{Learnability}

The two notions of learnability we consider are the PAC model and the active model. We now state the standard definitions of PAC and active learning in the context of preference relations. Most of the following definitions for the PAC setting are taken from \cite{FEDER} since the setup involving preference relations is identical. These definitions of course apply to the more general setting of concept learning (for example, see \cite{BLUMER}).

A collection $\mathcal{P}$ of preference relations is (PAC) learnable if there is a learning rule $\sigma$ such that for every $0<\varepsilon, \delta < 1$, there is $s(\varepsilon,\delta)\in\NN$ such that for every $n\ge s(\varepsilon, \delta)$, $\succsim\in\mathcal{P}$, and $\mu\in\Delta(X\times X)$, $$\mu^n(\{((x^1, y^1),\ldots, (x^n, y^n)) : \mu(\succsim^{*} \triangle \succsim) > \varepsilon \}) < \delta,$$ where $$\succsim^{*} = \sigma(\{((x^1, y^1), I_{x^1\succsim y^1}), \ldots, ((x^n, y^n), I_{x^n\succsim y^n})\})$$ is the hypothesis preference relation produced by the learning rule\footnote{$\mu^n$ denotes the product measure induced by $\mu$ on $(X\times X)^n$.}. The quantity $s(\varepsilon, \delta)$ is called the \emph{sample complexity} of the learning rule $\sigma$. 

The complexity of learning is commonly quantified by the Vapnik-Chervonenkis (VC) dimension, which we now define. A set of points $\{(x^1, y^1),\ldots, (x^n, y^n)\}$ from $X\times X$ is \emph{shattered} by a model of preferences $\mathcal{P}$ if for every vector of labels $(a_1,\ldots, a_n)\in\{0, 1\}^n$, there is a preference relation $\succsim\in\mathcal{P}$ that realizes the labelling, i.e. for $i=1,\ldots, n$ we have that $x^i\succsim y^i$ if and only if $a_i = 1$. In this case, $\mathcal{P}$ is said to \emph{rationalize} the dataset $\{((x^1, y^1), a_1), \ldots, ((x^n, y^n), a_n)\}$. The VC dimension of $\mathcal{P}$, denoted by $VC(\mathcal{P})$, is the largest integer $n$ such that there exist $n$ points that are shattered by $\mathcal{P}$.

Blumer et al. \cite{BLUMER} in 1989 proved that learnability is equivalent to having a finite VC dimension\footnote{This result requires $\mathcal{P}$ to satisfy a certain measurability requirement. We note in Section 4 that the models of choice we consider all satisfy said requirement.}.

\begin{thm}\label{VCEquivalence} A model of preferences $\mathcal{P}$ is learnable if and only if $VC(\mathcal{P}) < \infty$. \end{thm}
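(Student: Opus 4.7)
The plan is to give the standard proof due to Blumer, Ehrenfeucht, Haussler, and Warmuth, adapted to our preference setting. The key observation is that once we fix the instance space $X\times X$, a model $\mathcal{P}$ of preferences is simply a concept class on $X\times X$ (each $\succsim\in\mathcal{P}$ is the subset $\{(x,y):x\succsim y\}\subseteq X\times X$) and a dataset is just a labeled sample, so the classical VC theory applies verbatim. I would prove both directions separately.

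For the forward (easy) direction, I would show that $VC(\mathcal{P})=\infty$ prevents learnability via a no-free-lunch argument. Fix any candidate learning rule $\sigma$ and any prospective sample size $n$. Because the VC dimension is infinite, there exist $2n$ points $(x^1,y^1),\ldots,(x^{2n},y^{2n})$ that are shattered by $\mathcal{P}$. Put the uniform distribution $\mu$ on these $2n$ pairs. A standard averaging argument shows that for some $\succsim\in\mathcal{P}$ realizing an adversarial labeling, the expected error of $\sigma$ after seeing $n$ samples exceeds $1/4$ (roughly half the unseen half of the shattered set is mislabeled in expectation). Markov's inequality then contradicts the PAC guarantee for $\varepsilon,\delta$ small enough.

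For the reverse (hard) direction, I would show that $d:=VC(\mathcal{P})<\infty$ implies learnability via empirical risk minimization, taking $\sigma$ to be any rule that outputs a $\succsim\in\mathcal{P}$ consistent with the observed labels (consistency is possible because the data is generated by some $\succsim\in\mathcal{P}$). The proof has three ingredients: (i) the Sauer--Shelah lemma bounds the number of distinct labelings $\mathcal{P}$ induces on any $m$ points by $O(m^d)$; (ii) a double-sampling (symmetrization) argument replaces the unknown population error by an empirical quantity on a ghost sample, allowing the relevant bad event to be controlled by a union bound over the at-most $O((2n)^d)$ dichotomies on the combined sample; (iii) Chernoff/Hoeffding-type bounds then yield $s(\varepsilon,\delta)=O\bigl((d\log(1/\varepsilon)+\log(1/\delta))/\varepsilon\bigr)$. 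This shows that any ERM rule is a valid learning rule with sample complexity polynomial in $d$, $1/\varepsilon$, $\log(1/\delta)$.

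The main obstacle is the symmetrization step: one must pass from the uncountable event $\{\mu(\succsim^*\triangle\succsim)>\varepsilon\}$ to a finite event on a doubled sample, and this is where the measurability hypothesis alluded to in the footnote is used (to ensure the relevant suprema are measurable and a Fubini exchange is legal). Once that technicality is handled, the argument is a routine combination of Sauer--Shelah and a Chernoff bound, and the matching lower bound for the infinite-VC case is a short counting argument. I would note that the measurability condition is easy to verify for the polynomially-parametrized discounting models considered later in the paper, so invoking Theorem~\ref{VCEquivalence} for those models is unproblematic.
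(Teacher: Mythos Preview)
Your sketch is a faithful outline of the classical Blumer--Ehrenfeucht--Haussler--Warmuth argument, and there is no mathematical gap in it. However, the paper does not actually give its own proof of Theorem~\ref{VCEquivalence}: it simply states the result and attributes it to Blumer et al.~\cite{BLUMER}, with a footnote about the measurability hypothesis. So there is nothing in the paper to compare your proposal against; what you have written is essentially the content of the cited reference rather than an alternative to anything the authors do.
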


The VC dimension (denoted by $d$ for the remainder of this subsection) also plays a role in the sample complexity of learning a model of preferences. In the same paper, Blumer et al. \cite{BLUMER} show that any algorithm that outputs a hypothesis consistent with the data seen is a valid learning rule requiring sample complexity $$s(\varepsilon, \delta) = O\left(\frac{1}{\varepsilon}\left(d\log\frac{1}{\varepsilon} + \log\frac{1}{\delta}\right)\right).$$

In 2016, Hanneke \cite{HANNEKE2} showed that these bounds (after a small improvement) are tight: the optimal sample complexity of PAC learning is $$s(\varepsilon, \delta) = \Theta\left(\frac{1}{\varepsilon}\left(d+\log\frac{1}{\delta}\right)\right).$$

The other learning model we consider is the active learning framework, where the analyst has some control over the data from which he learns. In stream-based selective sampling, points drawn according to an unknown distribution are presented to the analyst as before, but without the labels. The analyst can choose whether or not to query the label of a given point, and the complexity of the learning rule is measured by \emph{label complexity}, i.e. the number of labels requested by the analyst. Disagreement based active learning refers to the paradigm in which the learner only requests labels on points that significantly reduce the hypothesis space. The disagreement of a preference model with respect to the underlying distribution is quantified through the \emph{disagreement coefficient} $\theta$, which is defined in Section 5. A finite disagreement coefficient implies (for the underlying distribution) an exponential improvement in label complexity over the sample complexity of PAC learning. For example, the CAL algorithm \cite{COHN, DASGUPTA, HANNEKE}, a simple disagreement based learning algorithm, yields a label complexity of $$\ell_{CAL}(\varepsilon, \delta) = O\left(\theta \log\frac{1}{\varepsilon}\left(d\log\theta + \log\frac{\log(1/\varepsilon)}{\delta}\right)\right).$$ 

In the membership queries model, the analyst is allowed to request the label for any point at any time. There appears to be a dearth of literature/results pertaining to the complexity of membership query algorithms for learning when the hypothesis space is infinite. One explanation for this is that improvements to the ``passive" disagreement based methods used in the stream-based setting would need specific information about the problem domain: disagreement based methods are designed to work on a very general class of concept learning problems without assuming anything about the learning space. In our case, we have specific details about how the preference relations take shape. Thus, the membership query model turns out to be an interesting and useful perspective to use in the study of learning preference models.

For a more detailed survey of active learning, see \cite{DASGUPTA}.

\subsection{Discounted utility}

We now present the definitions for the discounted utility models of intertemporal choice. An agent chooses between \emph{plans} or vectors in $X = \RR^T$ that encode payoffs over $T$ time periods. A preference relation over plans is a binary relation $\succsim\subseteq\RR^T\times\RR^T$.

The most important model of intertemporal choice is the \emph{discounted utility model}, in which the agent's payoffs $x_t$ for having chosen a plan $x\in\RR^T$ are reduced, or discounted, as $t$ increases from $1$ to $T$. In its most general form, we can characterize the preference relations that follow time discounting as follows:

\begin{definition}[Discounted utility model] The class of preference relations $\mathcal{P}_{\mathcal{D}}$ that satisfy the \emph{discounted utility model} are those $\succsim$ such that there exists a decreasing map $D:\{1,\ldots, T\}\to (0, 1)$ where $$x\succsim y \,\,\text{ if and only if }\,\, \sum_{t=1}^T D(t)x_t\ge\sum_{t=1}^T D(t) y_t.$$ \end{definition}

We use the following notation for the preference models arising from the three most commonly studied discounting functions $D$:

\begin{itemize}
\item $\mathcal{P}_{\mathcal{D}}$ denotes the set of preferences that satisfy the discounted utility model.
\item $\mathcal{P}_{\mathcal{ED}}$ denotes the set of preferences that satisfy the discounted utility model with \emph{exponential discounting}: $D(t) = \delta^t$ for $\delta\in (0, 1)$.
\item $\mathcal{P}_{\mathcal{HD}}$ denotes the set of preferences that satisfy the discounted utility model with \emph{hyperbolic discounting}: $D(t) = \frac{1}{1+t\alpha}$ for $\alpha > 0$.
\item $\mathcal{P}_{\mathcal{QHD}}$ denotes the set of preferences that satisfy the discounted utility model with \emph{quasi-hyperbolic discounting}: $D(t) = 1$ if $t = 1$, $D(t) = \beta\cdot\delta^{t-1}$ if $t > 1$ for $\beta,\delta\in (0, 1)$.
\end{itemize}

For a more thorough exposition on the various discounted utility models of intertemporal choice, see \cite{CHABRIS}.

\section{Main Results}

In this section we provide a formal discussion and interpretation of our results, which is split into two themes: the first dealing with an analyst who has no control over the learning data, the second dealing with an analyst who has some control over the learning data.

\subsection*{A powerless analyst}

The first part of our paper investigates the situation of an analyst trying to learn the preference relation by which an agent makes choices, but has no control over what choices he gets to observe, and is agnostic to the process by which they are drawn. We thus adopt the PAC learning model.

The agent chooses between plans that encode payoffs over $T$ periods of time and evaluates the total payoff of a plan vector $x\in\RR^T$ according to private weights $w_1,\ldots, w_T$ that he multiplicatively applies to each state: $\text{payoff}(x) = \sum_{t=1}^T w_tx_t.$ This defines a model of preference relations, which we denote by $\mathcal{P}_{\mathcal{W}}$, where for any $\succsim\in\mathcal{P}_{\mathcal{W}}$, there exists a vector of weights $w = (w_1,\ldots, w_T)\in\RR^T$ such that $$x\succsim y \, \, \text{ if and only if } w.x\ge w.y.$$ In \cite{FEDER}, it is shown that $T-1 \le VC(\mathcal{P}_{\mathcal{W}}) \le T+1$. In the context of choice over time, however, this model is extremely general and does not capture any of the intuitive notions of how an agent values payoffs when they are delayed\footnote{In \cite{FEDER} the complete control over weights is used to model choice under uncertainty, which calls for such generality since the agent's beliefs/weights are given by an element of the probability simplex on $\RR^T$.}. For example, the discounted utility models of intertemporal choice require the weights to be of a particular functional form. Moreover, when there is no structure to the discount function we cannot improve the bounds on $\mathcal{P}_{\mathcal{W}}$:

\begin{prop}\label{GeneralDiscounting} $T-1\le VC(\mathcal{P}_{\mathcal{D}})\le T+1$. \end{prop}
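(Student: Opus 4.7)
The plan is to establish the upper bound by inclusion and the lower bound by exhibiting an explicit shattered set.

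For the upper bound, observe that $\mathcal{P}_{\mathcal{D}} \subseteq \mathcal{P}_{\mathcal{W}}$: any discount function $D:\{1,\ldots,T\} \to (0,1)$ is in particular an element of $\RR^T$, and the corresponding preference relation is exactly the one induced by this weight vector. Hence $VC(\mathcal{P}_{\mathcal{D}}) \le VC(\mathcal{P}_{\mathcal{W}}) \le T+1$ by the result from \cite{FEDER} quoted above.

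For the lower bound, I will shatter a set of $T-1$ pairs. Fix any $M > 1$ (e.g., $M = 2$) and for $i = 1, \ldots, T-1$ set $x^i = e_i$ and $y^i = M e_{i+1}$, where $e_j$ denotes the $j$-th standard basis vector of $\RR^T$. For any $\succsim \in \mathcal{P}_{\mathcal{D}}$ with discount function $D$, we have $x^i \succsim y^i$ iff $D(i) \ge M D(i+1)$, i.e.\ iff the consecutive ratio $r_i := D(i)/D(i+1)$ is at least $M$. Given a target labeling $(a_1, \ldots, a_{T-1}) \in \{0,1\}^{T-1}$, I will construct a decreasing $D:\{1,\ldots,T\} \to (0,1)$ realizing this labeling by prescribing the ratios $r_i$: choose $r_i \in (M, \infty)$ if $a_i = 1$ and $r_i \in (1, M)$ if $a_i = 0$. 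Crucially, each choice forces $r_i > 1$, so $D$ will be strictly decreasing. Then set $D(T)$ to be any positive number smaller than $\left(\prod_{s=1}^{T-1} r_s\right)^{-1}$ and define $D(t) = D(T) \prod_{s=t}^{T-1} r_s$ for $t < T$; this guarantees $D(t) \in (0,1)$ for all $t$ and realizes the prescribed labeling.

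There is no substantive obstacle here: the only point requiring care is that the condition of being strictly decreasing in $(0,1)$ is compatible with \emph{both} ratio regimes $(1, M)$ and $(M, \infty)$, which is why an $M > 1$ gap (rather than $M = 1$) is necessary in the choice of $y^i$. The rest is a routine verification.
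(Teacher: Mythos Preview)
Your proof is correct and follows essentially the same approach as the paper. The upper bound via inclusion into $\mathcal{P}_{\mathcal{W}}$ (the paper phrases it via $\mathcal{P}_{\mathcal{I}}$, but this is the same cited result) is identical, and your lower-bound construction $(e_i, M e_{i+1})$ is the same as the paper's $((1-\varepsilon)e_i, e_{i+1})$ up to the reparametrization $M = 1/(1-\varepsilon)$; the paper builds $D$ forward from $D(1)$ while you build it backward from $D(T)$ via prescribed ratios, but the content is the same.
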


This leads us to the motivating question of the first part of the paper: what structural conditions can we impose on the weights $w_1,\ldots, w_T$ such that this bound can be improved?

We investigate the situation where the agent computes his weights by evaluating polynomials at a private parameter $\delta$. Specifically, let $Q_1,\ldots, Q_T$ be polynomials of degree at most $d$, and suppose the agent evaluates total payoff of a plan vector $x\in\RR^T$ by $\text{payoff}(x) = \sum_{t=1}^T Q_t(\delta)x_t$. Consequently, let $\mathcal{P}_{\mathcal{PW}}$ be the model of preference relations parametrized by $\delta$ such that $$x\succsim y\,\,\text{ if and only if }\sum_{t=1}^{T}Q_t(\delta)x_t\ge\sum_{t=1}^T Q_t(\delta)y_t.$$

This class of preference models allows us to approximate preference relations where the weights are given by any real valued functions -- we choose $Q_1,\ldots, Q_T$ to be the appropriate Taylor polynomials. Moreover, existing models of intertemporal choice fit this characterization -- for example $\mathcal{P}_{\mathcal{ED}}$ and $\mathcal{P}_{\mathcal{HD}}$. 

We additionally consider a slightly larger class of preference models where the agent has a private parameter $\beta$ (in addition to $\delta$) that in evaluating total payoff of a plan vector $x\in\RR^T$ allows the agent to modify the constant term $\sum_{t=1}^{T}Q_t(0)x_t$ of the polynomial $\sum_{t=1}^{T} Q_t(\delta)x_t$. This model aims to more generally capture the effects of the $\beta$ parameter in quasi-hyperbolic discounting. For polynomials $Q_1,\ldots, Q_T$ of degree at most $d$, let $\mathcal{P}_{\mathcal{BPW}}$ be the model of preference relations parametrized by $\beta$ and $\delta$ such that $x\succsim y$ if and only if $$\left(\frac{1}{\beta}-1\right)\sum_{t=1}^T Q_t(0)x_t+\sum_{t=1}^T Q_t(\delta)x_t \ge \left(\frac{1}{\beta}-1\right)\sum_{t=1}^T Q_t(0)y_t+\sum_{t=1}^T Q_t(\delta)y_t.\footnote{We use a factor of $(1/\beta-1)$ since it yields a clean description of quasi-hyperbolic discounting.}$$

Our main results show that with this additional structure on the preference model, we can achieve an exponential improvement in the bounds for the VC dimension of $\mathcal{P}_{\mathcal{W}}$ obtained in \cite{FEDER}\footnote{It is important to note that the classes $\mathcal{P}_{\mathcal{PW}}$ and $\mathcal{P}_{\mathcal{BPW}}$ are defined for a given $Q_1,\ldots, Q_T$. That is, the analyst knows $Q_1,\ldots, Q_T$, and is trying to learn the parameters $\beta$ and $\delta$. If the $Q_1,\ldots, Q_T$ are private information only available to the agent, we are in no better shape than in the case of $\mathcal{P}_{\mathcal{W}}$.}.

\begin{thm}\label{VC_upper} For every $\varepsilon > 0$, there exists a $d_{\varepsilon}$ such that for every $d\ge d_{\varepsilon}$ we have $VC(\mathcal{P}_{\mathcal{PW}}), VC(\mathcal{P}_{\mathcal{BPW}})\le (1+\varepsilon)\log d$ for any $T$ and any $T$ polynomials $Q_1,\ldots, Q_T$ of degree at most $d$.
\end{thm}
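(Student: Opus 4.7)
The plan is to convert the shattering question into a sign-pattern question for polynomials in the parameter(s), then exploit the low-dimensional structure of those polynomials to count the possible sign patterns. For fixed data pairs $(x^i,y^i)_{i=1}^n$, set
\[A_i := \sum_{t=1}^T Q_t(0)(x^i_t-y^i_t), \qquad B_i(\delta):=\sum_{t=1}^T Q_t(\delta)(x^i_t-y^i_t),\]
so that each $B_i$ is a polynomial in $\delta$ of degree at most $d$ with $B_i(0)=A_i$. A preference in $\mathcal{P}_{\mathcal{PW}}$ labels pair $i$ by $\sgn B_i(\delta)$, while a preference in $\mathcal{P}_{\mathcal{BPW}}$ labels it by $\sgn f_i(u,\delta)$ with $f_i(u,\delta):=uA_i+B_i(\delta)$ and $u:=\tfrac{1}{\beta}-1$. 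Shattering $n$ pairs amounts to producing all $2^n$ label vectors by varying the parameter(s).

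For $\mathcal{P}_{\mathcal{PW}}$, each $B_i$ has at most $d$ real roots, so their union partitions $\mathbb{R}$ into at most $nd+1$ open intervals on which the full label vector is constant; counting also the label vectors attained at the root-points gives at most $2nd+1$ distinct vectors in total. Shattering therefore requires $2^n\le 2nd+1$, and a short logarithmic calculation shows that for any $\varepsilon>0$ this fails once $n>(1+\varepsilon)\log_2 d$ and $d\ge d_\varepsilon$.

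For $\mathcal{P}_{\mathcal{BPW}}$ the parameter space is two-dimensional, so a black-box bound in the spirit of Warren or Milnor would only give $O((nd)^2)$ patterns -- sufficient for a factor-of-$2$ bound but not for the desired factor of $1+\varepsilon$. Instead I would exploit the fact that $f_i$ is affine in $u$: for $A_i\neq 0$, its sign as a function of $u$ (with $\delta$ fixed) flips exactly at $\xi_i(\delta):=-B_i(\delta)/A_i$. If the numbers $\{\xi_i(\delta)\}$ have a fixed relative ordering on some $\delta$-interval, then the label vectors realized as $u$ varies number at most $n+1$. The ordering changes only where some $A_jB_i(\delta)-A_iB_j(\delta)$ (a polynomial of degree $\le d$) vanishes, contributing at most $d\binom{n}{2}$ critical $\delta$'s; the indices $j$ with $A_j=0$ add at most $nd$ further critical values (the roots of the relevant $B_j$). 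Thus there are $O(n^2d)$ $\delta$-regions of constant behavior, and at most $O(n^3 d)$ label vectors overall. Shattering therefore forces $2^n\le O(n^3d)$, which rearranges to $n\le\log_2 d+3\log_2 n+O(1)$ and again yields $n\le(1+\varepsilon)\log_2 d$ for $d\ge d_\varepsilon$.

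The key technical step -- and the main obstacle -- is the two-parameter case: a generic sign-pattern bound in two variables is not sharp enough, and one must leverage the fact that $u$ appears only linearly, first fixing $\delta$ to reduce to a one-dimensional counting problem and only then analyzing how the combinatorial structure (the ordering of the $\xi_i$) varies with the one-dimensional parameter $\delta$.
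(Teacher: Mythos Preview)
Your proof is correct and follows the same overall strategy as the paper: translate shattering into a count of sign-pattern vectors of the polynomials $B_i$ (respectively $uA_i+B_i$), show that count is at most $\mathrm{poly}(n)\cdot d$, and deduce $n\le(1+\varepsilon)\log d$ for $d$ large. For $\mathcal{P}_{\mathcal{PW}}$ your argument and the paper's are essentially identical (the paper uses $nd+1$ rather than your $2nd+1$ because with the convention $\sgn(0)=1$ the root-points add nothing new, but this is immaterial).

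For $\mathcal{P}_{\mathcal{BPW}}$ you take a more explicit route. The paper argues in one line that each of the at most $nd+1$ base patterns $(\sgn B_i(\delta))_i$ can spawn at most $n$ additional patterns as $\beta$ varies, giving $(n+1)(nd+1)=O(n^2 d)$. You instead partition the $\delta$-axis by the $O(n^2d)$ points where either the relative ordering of the crossover values $\xi_i(\delta)=-B_i(\delta)/A_i$ changes or some $B_j$ with $A_j=0$ changes sign; on each resulting interval the $u$-path through the hypercube is then fixed, yielding at most $n+1$ patterns per interval and $O(n^3 d)$ in total. Your bound is a factor of $n$ looser, but your argument makes fully explicit why the affine dependence on $u$ is the crucial ingredient, and it directly handles the subtlety (which the paper's quick version glosses over) that two values of $\delta$ with the same base pattern may nonetheless have different $\xi_i$-orderings and hence generate different pattern sets as $u$ varies. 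Either bound suffices for the stated conclusion.
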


Note that when $Q_1,\ldots, Q_T$ have degree at most polynomial in $T$, we obtain an exponential improvement over the linear growth of $VC(\mathcal{P}_{\mathcal{W}})$. We show that in this case, we get a tight (asymptotic) bound of $\log T$:

\begin{thm}\label{VC_lower} Let $Q_1,\ldots, Q_T$ be polynomials in $\delta$ of degree at most $T-1$ that span the space of polynomials in $\delta$ of degree at most $T-1$. Then $VC(\mathcal{P}_{\mathcal{PW}}), VC(\mathcal{P}_{\mathcal{BPW}})\ge\log(T-1)$. \end{thm}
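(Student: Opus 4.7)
The plan is to reduce the shattering problem to a question about sign patterns of polynomials in a single variable, and then construct the required polynomials via a binary encoding trick.

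First, I would observe that whether $x \succsim y$ holds under a preference in $\mathcal{P}_{\mathcal{PW}}$ depends only on the difference vector $z = x - y$, via the sign of the polynomial $P_z(\delta) := \sum_{t=1}^{T} Q_t(\delta) z_t$, which has degree at most $T-1$. Since $Q_1, \ldots, Q_T$ span the space of polynomials of degree at most $T-1$, the map $z \mapsto P_z$ is surjective onto $\RR_{\le T-1}[\delta]$. Therefore, to shatter $n$ pairs, it suffices to exhibit polynomials $P_1, \ldots, P_n$ of degree at most $T-1$ such that, as $\delta$ ranges over $\RR$, the sign vector $(\sgn P_1(\delta), \ldots, \sgn P_n(\delta))$ achieves all $2^n$ elements of $\{-,+\}^n$; the corresponding shattered pairs are then $(x^{(i)}, y^{(i)}) = (z^{(i)}, 0)$ where $z^{(i)}$ realizes $P_i$.

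Next I would set $n = \lfloor \log(T-1) \rfloor$, pick $2^n$ real numbers $\delta_1 < \delta_2 < \cdots < \delta_{2^n}$, and construct $P_i$ so that $\sgn P_i(\delta_j)$ equals the $i$-th bit of the integer $j-1$ written in binary (positive for bit $1$, negative for bit $0$). The $i$-th bit, viewed as a function of $j \in \{1, \ldots, 2^n\}$, has exactly $2^{n-i+1} - 1$ transitions, so $P_i$ can be taken as an appropriately signed product of $2^{n-i+1} - 1$ linear factors whose roots are placed strictly in the corresponding gaps between consecutive $\delta_j$'s. The worst case is $P_1$ with degree $2^n - 1 \le T - 2 \le T - 1$, so the degree constraint is met. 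Given any target labeling $(a_1, \ldots, a_n) \in \{0,1\}^n$, choosing $\delta = \delta_j$ where $j-1 = \sum_i a_i 2^{i-1}$ realizes the labeling, which establishes the bound for $\mathcal{P}_{\mathcal{PW}}$.

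For $\mathcal{P}_{\mathcal{BPW}}$ the same pairs work. The preference condition for a difference vector $z^{(i)}$ now has the form $(1/\beta - 1) P_i(0) + P_i(\delta) \ge 0$. Since by construction $P_i(\delta_j) \ne 0$ at each of our evaluation points, fixing $\delta = \delta_j$ as above and then choosing $\beta \in (0,1)$ close enough to $1$ so that $(1/\beta - 1) \max_i |P_i(0)|$ is strictly smaller than $\min_i |P_i(\delta_j)|$ ensures the perturbation term cannot flip any sign, and the same labeling is realized.

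The construction itself is the main content; the only mildly subtle points are bookkeeping, namely verifying that the degree of $P_1$ does not exceed $T-1$ (which is exactly what forces the logarithmic bound and pins down the optimal $n$), and handling the open parameter constraint $\beta \in (0,1)$ in $\mathcal{P}_{\mathcal{BPW}}$, which is resolved by the perturbation argument above since shattering is a finite, open condition on the parameters.
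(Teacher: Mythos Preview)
Your proof is correct and follows essentially the same approach as the paper: reduce shattering to realizing all $2^n$ sign patterns by $n$ polynomials of degree at most $T-1$, construct such polynomials explicitly, and pull back to data points via the spanning hypothesis on $Q_1,\ldots,Q_T$. The paper's construction arranges the $2^n$ sign patterns along a Hamiltonian path on the hypercube (a Gray code) rather than your standard binary enumeration, but this is a cosmetic difference yielding the same degree bound; your perturbation argument for $\mathcal{P}_{\mathcal{BPW}}$ with $\beta$ near $1$ is also slightly more explicit than the paper's one-line reduction to $\mathcal{P}_{\mathcal{PW}}$.
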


An interesting feature of Theorems \ref{VC_upper} and \ref{VC_lower} is that for fixed $Q_1,\ldots, Q_T$ with degrees at most $T-1$, $VC(\mathcal{P}_{\mathcal{PW}})$ and $VC(\mathcal{P}_{\mathcal{BPW}})$ satisfy the same asymptotic bounds, so giving the agent an extra parameter that allows control over the constant term of the polynomial $\sum_{t=1}^T Q_t(\delta)x_t$ does not introduce a significant amount of richness to the model.

Applying Theorems \ref{VC_upper} and \ref{VC_lower} to the discounted utility models, we have:

\begin{cor}\label{Discount_bounds}$VC(\mathcal{P}_{\mathcal{ED}}), VC(\mathcal{P}_{\mathcal{HD}}), VC(\mathcal{P}_{\mathcal{QHD}})\sim\log(T-1)$ \end{cor}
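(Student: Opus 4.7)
The plan is to exhibit, for each of the three discounting classes, a system of polynomials $Q_1,\ldots,Q_T$ of degree at most $T-1$ under which the class embeds in $\mathcal{P}_{\mathcal{PW}}$ (or $\mathcal{P}_{\mathcal{BPW}}$ for the quasi-hyperbolic case), and then to apply Theorems \ref{VC_upper} and \ref{VC_lower} with $d = T-1$.

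For the upper bounds, I first handle exponential discounting: its weights $\delta^t$ can be rescaled by $1/\delta > 0$ (preferences are scale-invariant) to give $Q_t(\delta) = \delta^{t-1}$, of degree at most $T-1$. For hyperbolic discounting I would clear denominators in $1/(1+t\alpha)$ by multiplying through by the positive quantity $\prod_{s=1}^T(1+s\alpha)$, yielding equivalent polynomial weights $Q_t(\alpha) = \prod_{s\ne t}(1+s\alpha)$ of degree exactly $T-1$. Quasi-hyperbolic discounting fits into $\mathcal{P}_{\mathcal{BPW}}$ with $Q_t(\delta)=\delta^{t-1}$: since $Q_1(0)=1$ and $Q_t(0)=0$ for $t\ge 2$, the BPW weight $(1/\beta-1)Q_t(0) + Q_t(\delta)$ equals $1/\beta$ at $t=1$ and $\delta^{t-1}$ for $t\ge 2$, and rescaling by $\beta > 0$ recovers precisely the quasi-hyperbolic weights $(1,\beta\delta,\ldots,\beta\delta^{T-1})$. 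Theorem \ref{VC_upper} with $d=T-1$ then delivers $VC \le (1+\varepsilon)\log(T-1)$ for each class.

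For the matching lower bounds I would verify the spanning hypothesis of Theorem \ref{VC_lower}. For the exponential and quasi-hyperbolic instantiations this is immediate since $\{1,\delta,\ldots,\delta^{T-1}\}$ is the standard monomial basis. For hyperbolic discounting, the $Q_t(\alpha) = \prod_{s\ne t}(1+s\alpha)$ form a basis by the standard Lagrange argument: any relation $\sum_t c_t Q_t(\alpha)=0$ evaluated at $\alpha = -1/s$ annihilates every term but $c_s Q_s(-1/s)$, and $Q_s(-1/s) \ne 0$, so all $c_s = 0$. Invoking Theorem \ref{VC_lower} then gives $VC \ge \log(T-1)$.

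The main obstacle I anticipate is a subtlety about parameter ranges: Theorem \ref{VC_lower} shatters using $\mathcal{P}_{\mathcal{PW}}$ and $\mathcal{P}_{\mathcal{BPW}}$, whose parameters are a priori unrestricted, whereas the discounted utility classes constrain $\delta,\beta \in (0,1)$ and $\alpha > 0$. To inherit the lower bound for the restricted classes, I would either inspect the construction in Theorem \ref{VC_lower} to verify that the shattering values of $\delta$ (resp.\ $\alpha$) lie in the required intervals, or apply an affine reparametrization such as $\delta \mapsto a + (b-a)\delta$ that maps $(0,1)$ onto the relevant parameter range while preserving both the degree bound and the spanning property. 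Granting this, the combined upper and lower bounds yield $VC(\mathcal{P}_{\mathcal{ED}}),\, VC(\mathcal{P}_{\mathcal{HD}}),\, VC(\mathcal{P}_{\mathcal{QHD}}) \sim \log(T-1)$.
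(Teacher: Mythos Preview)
Your proposal is correct and follows essentially the same route as the paper: the paper identifies $\mathcal{P}_{\mathcal{ED}}$ and $\mathcal{P}_{\mathcal{QHD}}$ with $\mathcal{P}_{\mathcal{PW}}$ and $\mathcal{P}_{\mathcal{BPW}}$ respectively via $Q_t(\delta)=\delta^{t-1}$, and for $\mathcal{P}_{\mathcal{HD}}$ clears denominators to get $Q_t(\alpha)=\prod_{\ell\neq t}(1+\ell\alpha)$ and verifies linear independence by the same evaluation-at-$\alpha=-1/t$ argument you give. On the parameter-range obstacle you flag, the paper does not address it explicitly, but note that its proof of Theorem~\ref{VC_lower} already places all roots $r_i$ in $(0,1)$, so the shattering parameters automatically lie in $(0,1)$ (and hence in $(0,\infty)$ for the hyperbolic case).
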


Thus, $\mathcal{P}_{\mathcal{D}}$, $\mathcal{P}_{\mathcal{ED}}$, $\mathcal{P}_{\mathcal{HD}}$, and $\mathcal{P}_{\mathcal{QHD}}$ are all learnable. $\mathcal{P}_{\mathcal{D}}$ requires a minimum sample size that grows linearly with $T$, while $\mathcal{P}_{\mathcal{ED}}$, $\mathcal{P}_{\mathcal{HD}}$, and $\mathcal{P}_{\mathcal{QHD}}$ require a minimum sample size that grows logarithmically in $T$.

The main technique in proving Theorems \ref{VC_upper} and \ref{VC_lower} is interpreting the shattering criteria as a statement about the sign combinations achieved by a collection of polynomials. The upper bound on the VC dimension follows from an upper bound on the number of sign combinations a collection of polynomials can achieve. In demonstrating the lower bound on the VC dimension, we construct a set of points that is shattered by finding polynomials achieving all possible sign combinations -- chosen according to a Hamiltonian path in the $\log(T-1)$-dimensional hypercube.

\subsection*{A powerful analyst}

Our other results concern the active learning framework, which broadly deals with situations in which the analyst has some control over the choices he observes and learns from. The two models we consider are \emph{stream-based selective sampling} and learning via \emph{membership queries}. A large body of active learning research is devoted to the stream-based model, specifically focusing on disagreement based algorithms -- a class of learning algorithms that instructs the analyst only to request labels on points he sees that reduce the hypothesis space significantly. In the most general setting of concept learning, this is a useful framework since the error guarantees can be described using the same setup as the PAC model. Moreover without additional information about the problem domain, it is unclear how to devise efficient algorithms that are more specific in instructing the analyst on what questions to ask.

We find that the stream-based model is in general difficult to analyze for the preference relations we work with. This difficulty seems to arise from the apparent need to quantify disagreement in order to explicitly write down learning guarantees. Though in most general situations it is unclear how to quantify disagreement for our preference relations, we present a redeeming situation for which we are able to provide a precise analysis of the learning guarantees for $\mathcal{P}_{\mathcal{ED}}$. Here, the analyst can learn $\mathcal{P}_{\mathcal{ED}}$ with an exponential improvement in label complexity over the guarantees provided by the PAC model. This is achieved via a computation of the disagreement coefficient (defined in Section 5) of $\mathcal{P}_{\mathcal{ED}}$ for a specific distribution. 

\begin{thm}\label{finitemu}
There exists a distribution $\mu$ on $\RR^T\times\RR^T$ for which the disagreement coefficient of $\mathcal{P}_{\mathcal{ED}}$ is $\theta = 2$. Thus, for this distribution, $$\ell_{CAL}(\varepsilon) = \widetilde{O}\left(\log T\log\frac{1}{\varepsilon}\right),$$ where the $\widetilde{O}$ notation suppresses terms that are logarithmic in $\log T$ and $\log 1/\varepsilon$.
\end{thm}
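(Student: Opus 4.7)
The plan is to exploit the fact that $\mathcal{P}_{\mathcal{ED}}$ is essentially one-dimensional (parametrized by $\delta\in(0,1)$) and construct a distribution $\mu$ under which the $\mu$-pseudometric on hypotheses coincides with the Euclidean distance on the parameter $\delta$. If this happens, then disagreement balls pull back to intervals, and their disagreement regions are also intervals (of the \emph{same} length), so the disagreement coefficient is immediately $2$.

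Concretely, I would take $\mu$ to be supported on pairs of the form $(x,y)$ with $x-y = (-c,1,0,\dots,0)$ where $c\sim\text{Uniform}(0,1)$. For any $\delta\in(0,1)$, the condition $x\succsim y$ becomes $\sum_{t=1}^T\delta^t(x_t-y_t) = \delta(\delta - c)\ge 0$, which (since $\delta>0$) reduces to $\delta\ge c$. Thus for any two hypotheses $\delta_1,\delta_2\in(0,1)$, the $\mu$-probability of disagreement is exactly
\[
\mu(\succsim_{\delta_1}\triangle\succsim_{\delta_2}) \;=\; \Pr_{c\sim U(0,1)}[\,c\text{ lies between }\delta_1\text{ and }\delta_2\,] \;=\; |\delta_1-\delta_2|.
\]
Hence the disagreement ball of radius $r$ around any target $\delta^*$ is $B(\delta^*,r)=[\delta^*-r,\delta^*+r]\cap(0,1)$.

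Next I would compute the disagreement region. A pair $(x,y)$ parametrized by $c$ lies in $\text{DIS}(B(\delta^*,r))$ iff there exist $\delta_1,\delta_2\in B(\delta^*,r)$ with different labels, i.e.\ iff $c\in B(\delta^*,r)$. Therefore $\mu(\text{DIS}(B(\delta^*,r)))\le 2r$ for all $r>0$ and all $\delta^*\in(0,1)$, with equality whenever $B(\delta^*,r)\subseteq(0,1)$. This gives $\theta=2$ as required. Combining this with the VC-dimension bound $d = O(\log T)$ for $\mathcal{P}_{\mathcal{ED}}$ from Corollary~\ref{Discount_bounds} and plugging into the CAL guarantee,
\[
\ell_{CAL}(\varepsilon,\delta)=O\!\left(\theta\log\tfrac1\varepsilon\bigl(d\log\theta+\log\tfrac{\log(1/\varepsilon)}{\delta}\bigr)\right) = O\!\left(\log\tfrac1\varepsilon\bigl(\log T+\log\log\tfrac1\varepsilon+\log\tfrac1\delta\bigr)\right),
\]
which matches the claimed $\widetilde O(\log T\log(1/\varepsilon))$ bound after absorbing the $\log\log(1/\varepsilon)$ and $\log(1/\delta)$ factors into the tilde.

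The only genuinely delicate step is checking that the disagreement region equals the set of $c\in B(\delta^*,r)$ and not something larger: one must verify both directions, namely that labels agree across the entire interval whenever $c$ lies outside it (monotonicity of the label in $\delta$ for fixed $c$, which is immediate here), and that disagreement truly occurs for every interior $c$. Once this carefully chosen one-parameter family of pairs is in place, the rest of the argument reduces to standard substitution into CAL's label complexity formula; the real work is in recognizing that such a distribution exists at all, since more naive product distributions on $\RR^T\times\RR^T$ make the disagreement coefficient depend on $\delta^*$ and $T$ in a way that resists clean analysis.
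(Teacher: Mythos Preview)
Your proposal is correct and takes a genuinely different route from the paper. The paper constructs $\mu$ by pushing forward the product Lebesgue measure on $(0,1)^{T-1}$ through the map sending a tuple of roots to (an equivalence class of) the degree-$(T-1)$ polynomial with those roots; it then computes $d(\succsim_\delta,\succsim_\gamma)$ as the probability that a $\mathrm{Bin}(T-1,|\delta-\gamma|)$ random variable is odd, namely $\tfrac{1}{2}(1-(1-2|\delta-\gamma|)^{T-1})$, and carries out a case analysis (odd versus even $T-1$, $R\le\tfrac12$ versus $R>\tfrac12$) to conclude $\theta=2$. Your construction instead supports $\mu$ on a one-parameter family depending only on the first two coordinates, so that each hypothesis $\succsim_\delta$ becomes the threshold classifier $c\mapsto\mathbf{1}[\delta\ge c]$ on $(0,1)$; the pseudometric then collapses to $|\delta_1-\delta_2|$ exactly, balls are intervals, disagreement regions are the same intervals, and $\theta=2$ falls out with no case analysis at all.

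What each approach buys: yours is shorter, needs no probabilistic computation, and makes the reason for $\theta=2$ transparent (it is just the disagreement coefficient of interval thresholds under the uniform distribution). The paper's construction is more elaborate but has the feature that the sampled pairs genuinely involve all $T$ coordinates---the induced polynomials have $T-1$ distinct roots in $(0,1)$---so the learner is not facing what is effectively a two-period problem in disguise. For the bare existence statement of Theorem~\ref{finitemu}, however, your simpler distribution suffices.
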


The measure $\mu$ we construct is induced by the product Lebesgue measure on $(0, 1)^{T-1}$, and allows us to precisely translate statements about disagreement into statements about the roots of polynomials arising from a given choice. Once we have defined $\mu$, the calculation of $\theta$ follows from basic probability arguments.

Now, in our case the analyst has structural information regarding the preference relation of the agent he is questioning. We find that allowing the analyst full control over the membership queries he makes yields a learning algorithm that, despite its simplicity, takes advantage of this extra structure and yields a significant improvement in complexity over the stream-based setting. Additionally, the membership queries model naturally describes an experimental environment in which the analyst is able to ask the agent questions in real time.

We show that when the preference model satisfies some relatively benign structural requirements, even very naive algorithms outperform the guarantees provided by CAL in the stream-based setting. The example algorithm we give, relying on a simple binary search, has a query complexity of $O(\log 1/\varepsilon)$, which gets rid of the $\log T$ dependence in Theorem \ref{finitemu}.

The class of preference models is defined as follows: let $g_1,\ldots, g_T:\RR\to\RR$ be a collection of functions satisfying the properties listed in Section \ref{sec:memqueries} and consider the model of preference relations $\mathcal{P}$ parametrized by $\delta$ where $x\succsim y$ if and only if $\sum_{t=1}^T g_t(\delta)x_t\ge\sum_{t=1}^T g_t(\delta)y_t$\footnote{As before, the $g_1,\ldots, g_T$ are known to the analyst.}. We have

\begin{prop}\label{MembershipQueries}There exists an algorithm that takes as input $\varepsilon > 0$ and using $O(\log 1/\varepsilon)$ membership queries outputs $\delta^h$ such that $|\delta - \delta^h|\le\varepsilon$, where $\delta$ parametrizes the target preference relation in $\mathcal{P}$. \end{prop}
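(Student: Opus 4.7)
The plan is to binary-search for $\delta$ directly. Assuming the parameter $\delta$ lives in a known bounded interval $I_0$ (for the economic models of interest this is $(0,1)$), I maintain an interval $I_k \subseteq I_0$ of length $|I_0|\cdot 2^{-k}$ that is guaranteed to contain $\delta$, with midpoint $m_k$. A single well-chosen membership query will decide whether $\delta \le m_k$ or $\delta > m_k$; after $\lceil \log(|I_0|/\varepsilon) \rceil = O(\log 1/\varepsilon)$ halvings, $|I_k| \le \varepsilon$, and returning any $\delta^h \in I_k$ satisfies $|\delta - \delta^h| \le \varepsilon$.

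The query at step $k$ is engineered so that the discrepancy $F(\delta') := \sum_{t=1}^T g_t(\delta')(x_t - y_t)$ vanishes at $\delta' = m_k$ and has a single sign change across $I_k$. Since $x \succsim y$ is equivalent to $F(\delta) \ge 0$, the agent's answer reveals the sign of $F(\delta)$, which by the sign-change property pins down on which side of $m_k$ the parameter $\delta$ lies; the direction is determined in advance by the (computable) sign of $F'(m_k)$, so the analyst correctly assigns the appropriate half of $I_k$ as $I_{k+1}$. To build such a pair $(x,y)$, I pick two indices $s \ne t$ for which $g_s, g_t$ are suitable (per the properties of Section \ref{sec:memqueries}) and place the entire difference $x - y$ on those coordinates: setting $x_s - y_s = g_t(m_k)$ and $x_t - y_t = -g_s(m_k)$ forces $F(m_k) = 0$ automatically, while the Wronskian $g_t(m_k)g_s'(m_k) - g_s(m_k)g_t'(m_k)$ evaluated at $m_k$ supplies the desired local sign change.

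The main obstacle is upgrading a local sign change at $m_k$ to a guarantee that $F$ has exactly one sign change across \emph{all} of $I_k$, so that the sign of $F(\delta)$ really does decide the halving. This is where the structural hypotheses on $g_1,\ldots,g_T$ stipulated in Section \ref{sec:memqueries} must do the work: a Chebyshev-system-type property, or equivalently strict monotonicity of the ratio $g_s/g_t$ on $I_0$, forces $F$ restricted to the two-coordinate family $\{\alpha g_s + \beta g_t\}$ to change sign at most once on any subinterval. Granted such a property, the binary search is correct by induction on $k$, the halving is sharp at every step, and the total query complexity is $O(\log 1/\varepsilon)$ as claimed.
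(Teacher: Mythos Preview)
Your argument is correct but proceeds differently from the paper. The paper does \emph{not} binary-search on $\delta$; instead it binary-searches on the indifference payoff. Concretely, it fixes $\rho>0$, considers queries of the form $(\rho\, e_{t_1},\, b\, e_{t_2})$, and binary-searches over $b\in[0,M\rho]$ to approximate the value $b_\rho$ at which the agent is indifferent (equivalently, $b_\rho/\rho = g_{t_1}(\delta)/g_{t_2}(\delta)$). Condition~1 ($M<\infty$) bounds the search interval, so the search terminates in $O(\log 1/\varepsilon)$ queries; condition~2 (the inverse Lipschitz inequality with constant $C$) then converts an $\eta$-error in $b_\rho$ into an $\varepsilon$-error in $\delta$ via $|\delta-\delta^h|\le C|b_\rho-b_\rho^h|/\rho$.

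Your approach instead halves the $\delta$-interval directly, constructing at each step a query whose discrepancy $F$ vanishes at the current midpoint. This is arguably more transparent and avoids the final inversion step, but it leans on two things the paper's hypotheses do not literally provide: (i) a known bounded interval $I_0$ containing $\delta$ (the paper only bounds the \emph{ratio} $g_{t_1}/g_{t_2}$, not $\delta$ itself; e.g.\ for hyperbolic discounting $\alpha$ ranges over $(0,\infty)$), and (ii) strict monotonicity of $g_{t_1}/g_{t_2}$, which follows from the inverse Lipschitz condition only once continuity is also assumed. Both are harmless in the intended applications, as you note, but the paper's version is tailored exactly to the two stated conditions: boundedness of the ratio gives the search interval, and the inverse Lipschitz constant gives the error transfer, with no need for $\delta$ itself to lie in a bounded set.
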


The remainder of the paper is devoted to proving the results discussed in this section.

\section{PAC Learning}

In this section we prove Theorems \ref{VC_upper} and \ref{VC_lower}. We first note a preliminary upper bound due to Basu and Echenique \cite{FEDER}. Let $\mathcal{P}_{\mathcal{I}}$ be the set of preference relations that satisfy the following axioms:

\begin{description}
\item[Order:] For all $x, y$ either $x\succsim y$ or $y\succsim x$ (completeness). For all $x, y, z$, if $x\succsim y$ and $y\succsim z$, then $x\succsim z$ (transitivity). 
\item[Independence:] For all $x, y, z$ and for any $\lambda\in (0, 1)$, $x\succsim y$ if and only if $\lambda x + (1-\lambda)z\succsim\lambda y + (1-\lambda)z$.
\end{description}

The class $\mathcal{P}_{\mathcal{I}}$ satisfies the property that for any $\succsim\in\mathcal{P}_{\mathcal{I}}$, there are finitely many vectors $q_1,\ldots, q_K$, with $K\le T$, such that $x\succsim y$ if and only if $(q_k.x)_{k=1}^K \ge_{L} (q_k.y)_{k=1}^K$, where $\ge_{L}$ denotes the lexicographic order \cite{BLUME}. Then, $\mathcal{P}_{\mathcal{PW}}, \mathcal{P}_{\mathcal{BPW}}\subset\mathcal{P}_{\mathcal{W}}\subset\mathcal{P}_{\mathcal{I}}$, since the aforementioned characterization is satisfied with $K=1$ and $q_1 = (w_1,\ldots, w_T)$.

This has two main consequences. First, $\mathcal{P}_{\mathcal{PW}}$ and $\mathcal{P}_{\mathcal{BPW}}$ (and thus all the discounted utility models) satisfy the measurability requirement discussed in Lemma 4 of \cite{FEDER} for the equivalence result of Theorem \ref{VCEquivalence} to hold. Second, the VC dimensions of $\mathcal{P}_{\mathcal{PW}}$ and $\mathcal{P}_{\mathcal{BPW}}$ are all bounded above by $T+1$ (and in particular $T-1\le VC(\mathcal{P}_{\mathcal{W}}) \le T+1$). This follows due to Theorem 3.1 of \cite{FEDER}, in which an argument similar to that required to compute the VC dimension of the class of half-spaces is used to show that $VC(\mathcal{P}_{\mathcal{I}}) = T+1$. In all cases excluding the most general model of discounted utility, we are able to bring this down to $\log(T-1)$ (which we then show is tight by demonstrating the corresponding lower bound).

We begin by demonstrating that even in the discounted utility setting, without any structure we cannot do better than the learning bounds obtained for $\mathcal{P}_{\mathcal{I}}$.

\begin{propn}[\ref{GeneralDiscounting}] $T-1\le VC(\mathcal{P}_{\mathcal{D}})\le T+1$. \end{propn}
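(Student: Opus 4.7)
The plan is to handle the two bounds separately. The upper bound is essentially immediate from inclusions already noted in the text: any $\succsim\in\mathcal{P}_{\mathcal{D}}$ takes the form $x\succsim y\iff w\cdot x\ge w\cdot y$ with $w=(D(1),\ldots,D(T))$, so $\mathcal{P}_{\mathcal{D}}\subset\mathcal{P}_{\mathcal{W}}\subset\mathcal{P}_{\mathcal{I}}$. The section already recalled (via Theorem~3.1 of \cite{FEDER}) that $VC(\mathcal{P}_{\mathcal{I}})=T+1$, giving $VC(\mathcal{P}_{\mathcal{D}})\le T+1$ at no extra cost.

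For the lower bound I would exhibit an explicit shattered set of $T-1$ pairs. A clean choice is to let the difference vectors have a ``two-term'' pattern: take $x^i=e_i$ and $y^i=2e_{i+1}$ for $i=1,\ldots,T-1$, where $e_j$ is the $j$th standard basis vector in $\RR^T$. For any discount function $D$, one then has $x^i\succsim y^i$ iff $D(i)\ge 2D(i+1)$. The monotonicity of $D$ already forces $D(i)>D(i+1)$, so the only free question at each index is whether the ratio $D(i)/D(i+1)$ exceeds $2$ or merely lies in $(1,2)$, and these choices can be made independently across $i$.

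Concretely, to realize an arbitrary label vector $(a_1,\ldots,a_{T-1})\in\{0,1\}^{T-1}$ I would construct $D$ backwards: pick $D(T)$ small, and for $t=T-1,T-2,\ldots,1$ set
\[
D(t)=\begin{cases} 3\,D(t+1) & \text{if } a_t=1,\\ \tfrac{3}{2}\,D(t+1) & \text{if } a_t=0.\end{cases}
\]
Each multiplier exceeds $1$ (so $D$ is strictly decreasing) and is respectively $\ge 2$ or $<2$ (so the desired inequality $D(i)\gtreqless 2D(i+1)$ holds with the correct direction). Choosing $D(T)=3^{-T}$ (or any value less than $3^{-(T-1)}$) ensures all $D(t)$ lie in $(0,1)$, so $D$ is an admissible discount function.

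There is no substantive obstacle; the only point requiring a little care is ensuring simultaneously that (i) $D$ is strictly decreasing, (ii) $D$ takes values in $(0,1)$, and (iii) $D$ realizes each prescribed sign of $D(i)-2D(i+1)$. The recursive construction above discharges all three at once, so the argument is complete.
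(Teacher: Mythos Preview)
Your proof is correct and follows essentially the same approach as the paper: the upper bound is obtained from the inclusion $\mathcal{P}_{\mathcal{D}}\subset\mathcal{P}_{\mathcal{I}}$ together with Theorem~3.1 of \cite{FEDER}, and the lower bound comes from a shattered set of $T-1$ pairs built from consecutive standard basis vectors, with the discount function constructed recursively to realize any prescribed sequence of ratio inequalities. The only cosmetic differences are that the paper uses $x^i=(1-\varepsilon)e_i$, $y^i=e_{i+1}$ (so the threshold ratio is $1/(1-\varepsilon)$ rather than $2$) and defines $D$ forward rather than backward; neither change affects the substance of the argument.
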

\begin{proof}
That $VC(\mathcal{P}_{\mathcal{D}})\le T+1$ follows from Theorem 3.1 of \cite{FEDER}, since $\mathcal{P}_{\mathcal{D}}\subset\mathcal{P}_{\mathcal{I}}$. 

Here is a simple construction that shows $VC(\mathcal{P}_{\mathcal{D}})\ge T - 1$. Fix an $\varepsilon > 0$. Let $e_1,\ldots, e_T$ be the standard unit vectors in $\RR^T$, and consider the set of points $\{(x^1, y^1),\ldots, (x^{T-1}, y^{T-1})\}$, where $x^i = (1-\varepsilon)e_i$ and $y^i = e_{i+1}$.

This set is shattered by $\mathcal{P}_{\mathcal{D}}$: for any $(a_i)_{i=1}^{T-1}$, choose $D(1)$ arbitrarily from $(0, 1)$, and if $D(i)$ has been defined, inductively define $D(i+1)$ such that $D(i+1)\le D(i)(1-\varepsilon)$ if $a_i = 1$ and $D(i) > D(i+1) > (1-\varepsilon)D(i)$ if $a_i = 0$.
\end{proof}

We now prove Theorems \ref{VC_upper} and \ref{VC_lower}, which are restated below for convenience. 

\begin{thmn}[\ref{VC_upper}] For every $\varepsilon > 0$, there exists a $d_{\varepsilon}$ such that for every $d\ge d_{\varepsilon}$ we have $VC(\mathcal{P}_{\mathcal{PW}}), VC(\mathcal{P}_{\mathcal{BPW}})\le (1+\varepsilon)\log d$ for any $T$ and any $T$ polynomials $Q_1,\ldots, Q_T$ of degree at most $d$.\end{thmn}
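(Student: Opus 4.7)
The plan is to reformulate shattering as a statement about the number of sign patterns realized by a collection of polynomials, and then invoke classical bounds on such counts.

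For $\mathcal{P}_{\mathcal{PW}}$, suppose $\{(x^i, y^i)\}_{i=1}^n$ is shattered and define, for each $i$,
\[
p_i(\delta) := \sum_{t=1}^T Q_t(\delta)\,(x^i_t - y^i_t).
\]
Each $p_i$ is a univariate polynomial in $\delta$ of degree at most $d$, and shattering is equivalent to the statement that every pattern in $\{+,-\}^n$ (under the convention $+ \iff p_i(\delta) \ge 0$) arises for some $\delta$. The union of the real zero sets of $p_1,\ldots,p_n$ has at most $nd$ points, which partition $\RR$ into at most $nd+1$ open intervals on each of which every $\sgn p_i$ is constant; at each zero at most one additional pattern can appear. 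Thus at most $2nd+1$ distinct patterns are realized, so shattering forces $2^n \le 2nd+1$. Taking logarithms and solving yields $n \le (1+\varepsilon)\log d$ once $d$ is sufficiently large (specifically, once $d^{\varepsilon}$ dominates a constant multiple of $\log d$).

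For $\mathcal{P}_{\mathcal{BPW}}$ the same idea works, but in two variables. Clearing the $1/\beta$ denominator (which preserves inequalities since $\beta > 0$), the condition $x^i \succsim y^i$ becomes $P_i(\beta,\delta) \ge 0$, where
\[
P_i(\beta,\delta) := (1-\beta)\sum_{t=1}^T Q_t(0)(x^i_t - y^i_t) + \beta\sum_{t=1}^T Q_t(\delta)(x^i_t - y^i_t)
\]
is a polynomial in $(\beta,\delta)$ of total degree at most $d+1$. I would then invoke Warren's theorem: $n$ real polynomials of degree at most $D$ in $k$ variables realize at most $O((nD/k)^k)$ sign patterns on $\RR^k$. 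With $k=2$ and $D = d+1$ this bounds the pattern count by $O(n^2 d^2)$, and restricting $(\beta,\delta)$ to the actual parameter region $(0,1)^2$ can only decrease it. Shattering then forces $2^n = O(n^2 d^2)$, which again yields $n \le (1+\varepsilon)\log d$ for large $d$ by the same logarithmic comparison. Note that the bounds depend only on $n$ and $d$, not on $T$ or the particular choice of $Q_1,\ldots,Q_T$, as required by the theorem.

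The main technical delicacy I anticipate is the weak-inequality convention $\ge 0$ rather than $> 0$: the shattering labels may be realized precisely at zeros of the $p_i$'s or $P_i$'s, so one cannot restrict attention to the open cells of the sign arrangement alone. Fortunately, including the zero locus inflates the pattern count only by a polynomial-in-$(n,d)$ factor, which is absorbed into the logarithmic bound. Apart from this bookkeeping, the proof reduces to routine root counting in one variable and a direct application of Warren's theorem in two.
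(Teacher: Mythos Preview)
Your treatment of $\mathcal{P}_{\mathcal{PW}}$ is correct and essentially identical to the paper's: both reduce shattering to sign patterns of the univariate polynomials $p_i(\delta)=\sum_t Q_t(\delta)(x^i_t-y^i_t)$ and count at most $O(nd)$ such patterns, which forces $2^n\le O(nd)$ and hence $n\le(1+\varepsilon)\log d$.

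For $\mathcal{P}_{\mathcal{BPW}}$, however, your Warren-based argument does not deliver the stated constant. Warren's bound for $n$ polynomials of degree $\le d+1$ in two variables is $O(n^2d^2)$, as you write, but the inequality $2^n\le Cn^2d^2$ only yields $n\le\log C+2\log n+2\log d$, i.e.\ $n\le(2+\varepsilon)\log d$ for large $d$, not $(1+\varepsilon)\log d$. The extra factor of $d$ in the pattern count costs an extra $\log d$ in the conclusion, so your ``same logarithmic comparison'' claim is off by a factor of two.

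The paper avoids this loss by exploiting the specific affine structure in $\beta$ rather than treating $(\beta,\delta)$ as a generic two-variable problem. With $c:=1/\beta-1$, the expression $p_i(\delta)+c\,p_i(0)$ is linear in $c$ for each fixed $\delta$, so as $c$ ranges over its domain each coordinate flips sign at most once, giving at most $n+1$ sign patterns per fixed $\delta$. Combined with the $nd+1$ patterns in $\delta$, this yields at most $(n+1)(nd+1)=O(n^2d)$ total patterns, linear in $d$, and hence $2^n\le O(n^2d)$ does give $n\le(1+\varepsilon)\log d$. You can repair your proof by replacing the Warren step with this elementary one-parameter-at-a-time count.
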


\begin{proof}
It suffices to establish the bound for $\mathcal{P}_{\mathcal{PBW}}$.

Let $(z^1,\ldots, z^n)$ be a set of points in $\RR^T\times\RR^T$, $z^i = (x^i, y^i)$. For each $z^i = (x^i, y^i)$, define the plan $f^i := x^i - y^i$. Then, note that $(z^1,\ldots, z^n)$ is shattered by $\mathcal{P}_{\mathcal{PBW}}$ if and only if $((f^1, 0),\ldots, (f^n, 0))$ is shattered by $\mathcal{P}_{\mathcal{PBW}}$. Hence, we may (and do) restrict attention to datasets of the form $((f^1, 0),\ldots, (f^n, 0))$.

We have that $((f^1, 0),\ldots, (f^n, 0))$ is shattered by $\mathcal{P}_{\mathcal{PBW}}$ if and only if for all vectors $(a_1,\ldots, a_n)\in\{0, 1\}^n$, there exists a $\delta$ and $\beta$ (which determines the preference relation) such that $$(Q_T(\delta)f^i_T+\cdots+ Q_1(\delta)f^i_1) + \left(\frac{1}{\beta}-1\right)(Q_T(0)f^i_T+\cdots+ Q_1(0)f^i_1)\ge 0 \text{ whenever } a_i = 1,$$ and $$(Q_T(\delta)f^i_T+\cdots+ Q_1(\delta)f^i_1) + \left(\frac{1}{\beta}-1\right)(Q_T(0)f^i_T+\cdots+ Q_1(0)f^i_1) < 0 \text{ whenever } a_i = 0.$$ 

We first show that for all $\varepsilon > 0$, for sufficiently large $d$ we have $VC(\mathcal{P}_{\mathcal{PBW}})\le (1+\varepsilon)\log d$. Note that if the $n$ points $((f^1, 0), \ldots, (f^n, 0))$ can be shattered, there are polynomials $P_1,\ldots, P_n$ in $\delta$ (where $P_i$ is the polynomial $Q_1(\delta)f^i_1+\cdots+Q_{T}(\delta)f^i_T$), each of degree at most $d$, such that for every labeling $(a_1,\ldots, a_n)\in \{0, 1\}^n$, there exists a $\delta$ and $\beta$ such that $$(\sgn(P_1(\delta) + (1/\beta-1)P_1(0)),\ldots, \sgn(P_n(\delta) + (1/\beta-1)P_n(0))) = (a_1,\ldots, a_n) \footnote{For notational convenience, let $\sgn(x)$ be $1$ if $x\ge 0$ and $0$ otherwise.}.$$

First, for any $n$ polynomials $P_1,\ldots, P_n$ of degree at most $d$, we give an upper bound on the number of possible values $(\sgn(P_1(\delta)),\ldots, \sgn(P_n(\delta)))$ can realize. Each polynomial has at most $d$ real roots, so together $P_1,\ldots, P_n$ have at most $nd$ distinct real roots. Since sign changes can only occur at the roots, there are at most $nd+1$ possible values of $\{0, 1\}^n$ that $(\sgn(P_1(\delta)),\ldots, \sgn(P_n(\delta)))$ can realize. 

Now, for a fixed $\delta$, varying $\beta$ shifts the collection of polynomials $$P_1(\delta) + (1/\beta-1)P_1(0),\ldots, P_n(\delta) + (1/\beta-1)P_n(0)$$ vertically, which in the worst case induces sign changes in all entries. We thus get at most an additional $n$ new sign combinations for every sign combination realized by $(\sgn(P_1(\delta)),\ldots, \sgn(P_n(\delta)))$. Hence, there are at most $nd + 1 + n(nd + 1) = (n^2+n)d + n + 1$ possible values of $\{0, 1\}^n$ that $$(\sgn(P_1(\delta) + (1/\beta-1)P_1(0)),\ldots, \sgn(P_n(\delta) + (1/\beta-1)P_n(0)))$$ can realize.

In order for all $2^n$ elements of $\{0, 1\}^n$ to be realized, it must be that $$(n^2+n)d + n+1\ge 2^n.$$ If $n > (1+\varepsilon)\log d$, then for large enough $d$ this inequality does not hold, and so any set of $n$ points cannot be shattered. Thus, for all $\varepsilon > 0$, $n\le (1+\varepsilon)\log d$ for large enough $d$, i.e. $VC(\mathcal{P})\le (1+\varepsilon)\log d$.\end{proof}

We now establish the corresponding lower bound when the polynomials $Q_1,\ldots, Q_T$ span the space of polynomials of degree at most $T-1$.

\begin{thmn}[\ref{VC_lower}] Let $Q_1,\ldots, Q_T$ be polynomials in $\delta$ of degree at most $T-1$ that span the space of polynomials in $\delta$ of degree at most $T-1$. Then $VC(\mathcal{P}_{\mathcal{PW}}), VC(\mathcal{P}_{\mathcal{BPW}})\ge\log(T-1)$. \end{thmn}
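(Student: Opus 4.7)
The plan is to reduce shattering to a question about the sign patterns of polynomials in the single variable $\delta$, and then to realize all $2^n$ sign patterns by picking polynomials whose roots are placed according to a Hamiltonian path on $\{0,1\}^n$. Since $\mathcal{P}_{\mathcal{PW}} \subseteq \mathcal{P}_{\mathcal{BPW}}$ (setting $\beta = 1$ makes the $(1/\beta - 1)$ term vanish), it suffices to prove the lower bound for $\mathcal{P}_{\mathcal{PW}}$. Repeating the differencing step from the proof of Theorem \ref{VC_upper}, the dataset $((x^i, y^i))_{i=1}^n$ is shattered iff $((f^i, 0))_{i=1}^n$ is, where $f^i := x^i - y^i$; and this is equivalent to the requirement that for every $(a_1, \ldots, a_n) \in \{0,1\}^n$ there exists $\delta$ with $\sgn P_i(\delta) = a_i$ for all $i$, where $P_i(\delta) := \sum_t Q_t(\delta) f^i_t$. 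Because $Q_1, \ldots, Q_T$ span the space of polynomials in $\delta$ of degree at most $T-1$, the map $f \mapsto \sum_t f_t Q_t$ is surjective onto that space, so choosing the $f^i \in \RR^T$ is the same as choosing the $P_i$ freely among polynomials of degree at most $T-1$.

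With $n := \lfloor \log(T-1) \rfloor$, I would enumerate $\{0,1\}^n$ via a Gray code $v_1, v_2, \ldots, v_{2^n}$ so that consecutive $v_j, v_{j+1}$ differ in exactly one coordinate, pick reals $\delta_1 < \cdots < \delta_{2^n}$, and for each $i \in \{1, \ldots, n\}$ define $P_i(\delta)$ to be a signed product $\pm \prod_\ell (\delta - r_\ell^i)$, placing one root $r_\ell^i$ in each open interval $(\delta_j, \delta_{j+1})$ at which the $i$-th coordinate of the Gray code flips, with leading sign chosen so that $\sgn P_i(\delta_1) = (v_1)_i$. Between consecutive roots $P_i$ has constant sign, and crossing a root flips it precisely when the $i$-th bit of the Gray code does, so $\sgn P_i(\delta_j) = (v_j)_i$ for every $i, j$. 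Any labeling $(a_1, \ldots, a_n) \in \{0,1\}^n$ coincides with some $v_j$, and so is realized by the choice $\delta = \delta_j$; this shows the proposed dataset is shattered.

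The only substantive point to verify is the degree bound: $\deg P_i$ equals the number of flips of the $i$-th coordinate along the Gray code, which for the standard reflected Gray code is at most $2^{n-1} \le (T-1)/2 \le T-1$, so indeed each $P_i$ lies in $\RR[\delta]_{\le T-1}$ as required. I expect this degree check to be the main (though mild) obstacle; the remainder of the argument is a routine translation between shattering, sign patterns, and polynomial roots, made possible by the spanning hypothesis on $Q_1, \ldots, Q_T$. A naive alternative — Lagrange-interpolating $\pm 1$ values at $2^n$ points — would also give a polynomial of degree $2^n - 1 \le T - 2$, but the Gray code construction keeps the degree count transparent and mirrors how the upper bound in Theorem \ref{VC_upper} was obtained by counting sign changes.
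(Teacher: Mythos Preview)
Your proposal is correct and follows essentially the same route as the paper: both reduce to realizing all $2^n$ sign patterns by polynomials of degree at most $T-1$, place roots according to the bit-flips along a Hamiltonian path (Gray code) on $\{0,1\}^n$, and invoke the spanning hypothesis on $Q_1,\dots,Q_T$ to pull the $P_i$ back to plan differences $f^i$. The only cosmetic differences are that the paper uses the cruder degree bound (each coordinate flips at most $2^n-1\le T-2$ times along any Hamiltonian path) rather than your $2^{n-1}$ bound for the reflected code, and it does not bother fixing the leading sign of each $P_i$---unnecessary, since applying the same sequence of single-bit flips from any starting vertex still sweeps out all of $\{0,1\}^n$.
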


\begin{proof}
It suffices to establish the bound for $\mathcal{P}_{\mathcal{PW}}$.

Consider the graph on $\{0,1\}^n$ where two vertices are connected by an edge if they differ in exactly one location. Fix a Hamiltonian path $v_1, v_2, \ldots, v_{2^n}$ in this graph (the existence of which is well known). Let $b_{1, 2},\ldots, b_{2^{n}-1, 2^n}$ be the sequence where $b_{i, i+1}$ is the index of the location at which $v_i$ and $v_{i+1}$ differ. Note that if $n = \log (T-1)$, the graph has $T-1$ vertices, so each index in $\{1,\ldots, n\}$ can appear in the sequence $(b_{i, i+1})$ at most $T - 1$ times. 

Now, let $r_1 < r_2 <\cdots < r_{2^n}$ be any points in $(0, 1)$. Define $n$ polynomials $P_1,\ldots, P_n$ by $P_k(\delta) = \prod_{b_{i, i+1} = k}(\delta-r_i),$ so the roots of $P_k$ are precisely the $r_i$'s that correspond to a flip in the entry at the $k$th position of a vertex in the path. Then, $(\sgn(P_1(\delta)),\ldots, \sgn(P_n(\delta)))$ realizes every element of $\{0, 1\}^n$.

Since $Q_1,\ldots, Q_T$ span the space of polynomials of degree at most $T-1$, for each $P_i$ we can find $f^i_1,\ldots, f^i_T$ such that $$P_i(\delta) = Q_1(\delta)f^i_1+\cdots + Q_T(\delta) f^i_T,$$ which gives us a collection of $\log(T-1)$ points that is shattered. Hence $\log(T-1)\le VC(\mathcal{P})$.
\end{proof}

It is readily seen that $\mathcal{P}_{\mathcal{ED}}$ and $\mathcal{P}_{\mathcal{QHD}}$ satisfy the conditions of Theorems \ref{VC_upper} and \ref{VC_lower}\footnote{$\mathcal{P}_{\mathcal{ED}}$ is given by $\mathcal{P}_{\mathcal{PW}}$ with $Q_t(\delta) = \delta^{t-1}$, and $\mathcal{P}_{\mathcal{QHD}}$ is given by $\mathcal{P}_{\mathcal{BPW}}$ with $Q_t(\delta) = \delta^{t-1}$.}. We give a quick argument verifying that $\mathcal{P}_{\mathcal{HD}}$ does as well: For $1\le t\le T$, let $Q_t(\alpha) = \prod_{\ell\in\{1,\ldots, T\}\setminus\{t\}}(1+\ell\alpha)$ (these are the polynomials obtained by clearing denominators of the hyperbolic discount factors). We argue that $\{Q_t(\alpha)\}_{t=1}^{T}$ are linearly independent over the vector space of polynomials in $\alpha$ of degree at most $T-1$. Indeed, if $Q_1(\alpha)f_1+\cdots+Q_T(\alpha)f_T = 0$, then we must have that $f_1 =\cdots = f_T = 0$ since at $\alpha=\frac{-1}{t}$ we get $Q_t(\alpha)f_t = 0$. Hence, $\mathcal{P}_{\mathcal{HD}}$ is simply $\mathcal{P}_{\mathcal{PW}}$ with $Q_t(\alpha) = \prod_{\ell\in\{1,\ldots, T\}\setminus\{t\}}(1+\ell\alpha)$.

\emph{Remark.} These bounds also hold in the scenario where the agent can report indifference in the data. More precisely, the condition for $x\succsim y$ is now a strict inequality, and we have three possible labels for the pair $(x, y)$: $+1$ indicates that $x\succsim y$, $-1$ indicates that $y\succsim x$, and $0$ indicates that $x\sim y$. Then, using a Hamiltonian path on $\{-1, 0, 1\}^n$, we can construct polynomials $P_1,\ldots, P_n$ such that $(\sgn(P_1(\delta)),\ldots, \sgn(P_n(\delta)))$ realizes all elements of $\{-1, 0, 1\}^n$ as $\delta$ ranges from $0$ to $1$ (where $\sgn$ is the true sign function).

\subsection{A Remark on Efficient Learnability}

While PAC learnability is a positive result, it does not take into account the computational complexity of computing a hypothesis. Blumer et. al. \cite{BLUMER} show that any learning rule that outputs a hypothesis consistent with the data seen yields with high probability a hypothesis that has very low error. However, if the problem of outputting a consistent hypothesis is computationally intractable, PAC learnability on its own is perhaps unsatisfying. In this section we note that the discounted utility models of intertemporal choice are \emph{efficiently} learnable. This is due to an algorithm of Grigor'ev and Vorobjov \cite{GRIGOR} for solving a system of polynomial inequalities.

For notational convenience, it will be useful to write $\mathcal{P} = \{\mathcal{P}^T\}_{T\ge 1}$, for each of the models above, where $\mathcal{P}^T$ is the collection of preference relations for a given $T$. Moreover, suppose acts are chosen from $[-1, 1]^T$ instead of $\RR^T$. It is clear that this does not change any of the analysis above.

Polynomial learnability, as defined by Blumer et. al. \cite{BLUMER}, stipulates that the learning rule be computable in $\text{poly}(1/\varepsilon, 1/\delta, T)$-time (where $\varepsilon$ and $\delta$ denote the error threshold and confidence threshold respectively). Polynomial learnability is equivalent to the task of outputting a hypothesis consistent with the given data set in polynomial time \cite{BLUMER}.

\begin{definition}
A \emph{randomized polynomial hypothesis finder (r-poly hy-fi)} for $\mathcal{P}$ is a randomized polynomial time algorithm that takes as input a sample of a preference relation in $\mathcal{P}$, and for some $\gamma > 0$, with probability at least $\gamma$ produces a hypothesis that is consistent with the sample.
\end{definition}

\begin{thm}
$\mathcal{P}$ is properly polynomially learnable if and only if there is an r-poly hy-fi for $\mathcal{P}$ and $VC(\mathcal{P}^T)$ grows only polynomially in $T$.
\end{thm}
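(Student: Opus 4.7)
The plan is to prove the two directions separately, leaning on two prior results: the Blumer et al. Occam's razor equivalence between polynomial learnability and the existence of a polynomial-time consistent hypothesis finder, and the tight PAC sample complexity bounds parametrized by VC dimension.

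For the forward direction, I would first argue that proper polynomial learnability yields, via the Blumer et al. equivalence cited just above the theorem, a deterministic polynomial-time algorithm that outputs a hypothesis in $\mathcal{P}$ consistent with any given sample. Such an algorithm is trivially an r-poly hy-fi with success probability $\gamma = 1$. Next, polynomial learnability demands runtime $\text{poly}(1/\varepsilon, 1/\delta, T)$, which in particular forces the sample complexity to be polynomial in those parameters (since the algorithm must at least read its sample). Combined with the Hanneke lower bound $s(\varepsilon,\delta) = \Omega\bigl(\tfrac{1}{\varepsilon}(VC(\mathcal{P}^T) + \log\tfrac{1}{\delta})\bigr)$ stated earlier, this forces $VC(\mathcal{P}^T)$ to grow at most polynomially in $T$.

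For the reverse direction, suppose we are given an r-poly hy-fi $A$ with fixed success probability $\gamma > 0$, and $VC(\mathcal{P}^T) = p(T)$ for some polynomial $p$. Given $\varepsilon, \delta > 0$, I would draw a sample $S$ of size
$$m = O\!\left(\frac{1}{\varepsilon}\left(p(T)\log\frac{1}{\varepsilon} + \log\frac{2}{\delta}\right)\right),$$
which is polynomial in $1/\varepsilon$, $1/\delta$, and $T$. Then I would run $A$ on $S$ independently $k = O(\log(1/\delta))$ times so that, since $\gamma$ is a fixed constant, with probability at least $1 - \delta/2$ at least one run produces a hypothesis in $\mathcal{P}$ consistent with $S$. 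Consistency of each output can be verified against $S$ in polynomial time (each preference relation in the models considered is determined by a polynomial inequality and so can be evaluated on a pair of plans in polynomial time). Output any consistent hypothesis found; if none is found, output an arbitrary hypothesis in $\mathcal{P}$. By the Blumer et al. Occam's razor guarantee applied at the sample size $m$, any hypothesis in $\mathcal{P}$ consistent with $S$ has error at most $\varepsilon$ with probability at least $1 - \delta/2$. A union bound over the two failure events (boosting failure and Occam failure) yields overall success probability at least $1 - \delta$, establishing proper polynomial learnability.

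The main obstacle, I expect, lies in the reverse direction, specifically in making the boosting step fully rigorous. For $O(\log(1/\delta))$ repetitions of $A$ to suffice, $\gamma$ must be a fixed constant independent of the input, and consistency against a polynomial-sized sample must be checkable in polynomial time. Both are natural for the preference models studied in this paper, but they should be noted explicitly as part of the meaning of ``r-poly hy-fi'' so that the boosting argument goes through without hidden dependence on the target preference relation.
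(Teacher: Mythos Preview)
The paper does not prove this theorem at all: it is stated without proof as a restatement of the classical Blumer--Ehrenfeucht--Haussler--Warmuth result (Theorem~3.1.1 in \cite{BLUMER}), and the paper immediately applies it as a black box together with the Grigor'ev--Vorobjov algorithm. So there is no ``paper's own proof'' to compare against.

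Your sketch is essentially the standard argument from \cite{BLUMER}, and the reverse direction is correct as written. One caution on the forward direction: you invoke ``the Blumer et al.\ equivalence cited just above the theorem'' to extract a deterministic polynomial-time consistent hypothesis finder from a polynomial PAC learner. But that cited sentence is an informal paraphrase of precisely the theorem you are asked to prove, so appealing to it is circular. The non-circular argument is the usual one: given a sample $S$ of size $m$, run the assumed polynomial PAC learner with the uniform distribution on $S$, accuracy parameter $\varepsilon < 1/m$, and any fixed confidence $\delta_0 < 1$; with probability at least $1-\delta_0$ the output has error $< 1/m$ on the uniform distribution over $S$, hence is consistent with $S$. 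This runs in time polynomial in $m$ and $T$ and succeeds with fixed probability $\gamma = 1-\delta_0$, giving an r-poly hy-fi (generally randomized, not deterministic as you wrote). Your VC argument via the sample-complexity lower bound is fine.
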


We have just shown that $VC(\mathcal{P}^T)\sim\log(T-1)$. Using techniques involving algebraic geometry, Grigor'ev and Vorobjov \cite{GRIGOR} present an algorithm to solve a system of $N$ polynomial inequalities with a $\text{poly}(N, T)$ runtime. This serves as our r-poly hy-fi, and thus we obtain:

\begin{thm}
$\mathcal{P}_{\mathcal{ED}}$, $\mathcal{P}_{\mathcal{HD}}$, and $\mathcal{P}_{\mathcal{QHD}}$ are all properly polynomially learnable.
\end{thm}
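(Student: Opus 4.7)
The plan is to invoke the equivalence theorem stated just above: we need to verify both that $VC(\mathcal{P}^T)$ grows polynomially in $T$ and that there is a randomized polynomial hypothesis finder (r-poly hy-fi) for each of $\mathcal{P}_{\mathcal{ED}}$, $\mathcal{P}_{\mathcal{HD}}$, $\mathcal{P}_{\mathcal{QHD}}$. The first condition is immediate from Corollary \ref{Discount_bounds}, which establishes $VC(\mathcal{P}^T) \sim \log(T-1)$, clearly polynomial (indeed logarithmic) in $T$. So the substantive content is to construct an r-poly hy-fi for each model.

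For the construction of the r-poly hy-fi, I would first reduce the consistency problem to solving a system of polynomial inequalities in the preference parameters. Given a sample $\{((x^i, y^i), a_i)\}_{i=1}^N$ where each $x^i, y^i \in [-1, 1]^T$, the consistency condition for the $i$th datum is a polynomial inequality in the unknown parameter(s): for $\mathcal{P}_{\mathcal{ED}}$ the inequality $\sum_{t=1}^T \delta^{t-1}(x^i_t - y^i_t) \ge 0$ (or $< 0$) in $\delta$; for $\mathcal{P}_{\mathcal{HD}}$ the inequality obtained after clearing denominators, namely $\sum_{t=1}^T \prod_{\ell \ne t}(1+\ell\alpha)(x^i_t - y^i_t) \ge 0$ (or $< 0$) in $\alpha$, which is a polynomial of degree at most $T-1$; and for $\mathcal{P}_{\mathcal{QHD}}$ a bivariate polynomial inequality in $(\beta, \delta)$ obtained analogously (again after clearing denominators). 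In every case, the sample induces $N$ polynomial inequalities of degree $\mathrm{poly}(T)$ in at most two variables, and the task of producing a consistent hypothesis is exactly the task of finding a point in the feasible region, together with the boundary constraints $\delta, \alpha \in (0,1)$ (resp. $\beta \in (0,1)$) encoded as additional polynomial inequalities.

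Next, I would invoke the Grigor'ev-Vorobjov algorithm \cite{GRIGOR}, which solves any system of $N$ polynomial inequalities over an ordered field in time $\mathrm{poly}(N, T)$ when the number of variables is bounded and the degrees are polynomial in $T$. Applying this to each of the three systems above returns, in polynomial time, a parameter setting realizing a preference relation consistent with the sample --- this hypothesis lies in the target class, so the learning is proper. Since the algorithm is deterministic (and hence succeeds with probability $1$), it certainly qualifies as an r-poly hy-fi. Combining this with the polynomial (in fact logarithmic) growth of the VC dimension and applying the stated equivalence theorem yields proper polynomial learnability of all three models.

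The main point requiring care, though not really a deep obstacle, is the hyperbolic case: the discount factors $1/(1+t\alpha)$ are rational rather than polynomial in $\alpha$, so one must clear denominators uniformly across all $t$ (as done in the proof that $\mathcal{P}_{\mathcal{HD}}$ fits the $\mathcal{P}_{\mathcal{PW}}$ framework) to legitimately invoke Grigor'ev-Vorobjov; since $1 + t\alpha > 0$ for $\alpha > 0$, this transformation preserves the direction of each inequality. One should also briefly note that the parameter domains $(0,1)$ (and the constraint $\alpha > 0$ for hyperbolic discounting, and the independent domain for $\beta$ in the quasi-hyperbolic case) can be encoded as $O(1)$ additional polynomial inequalities, so the input to the Grigor'ev-Vorobjov solver remains of size $\mathrm{poly}(N, T)$.
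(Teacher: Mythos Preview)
Your proposal is correct and follows essentially the same approach as the paper: verify polynomial VC growth via Corollary~\ref{Discount_bounds}, reduce the consistency problem to a system of polynomial inequalities in the one or two preference parameters, and invoke the Grigor'ev--Vorobjov solver as the r-poly hy-fi. If anything, you supply more detail than the paper does---the paper merely cites the algorithm without spelling out the reduction, whereas you correctly note the need to clear denominators in the hyperbolic case and to encode the parameter domains as additional constraints.
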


\section{Active Learning}\label{active_learning}

In this section we study two models of active learning: \emph{stream-based selective sampling} and learning via \emph{membership queries}. We first define a distribution for which the disagreement coefficient of $\mathcal{P}_{\mathcal{ED}}$ is $2$, showing that disagreement methods (specifically the CAL algorithm \cite{COHN, DASGUPTA, HANNEKE}) in the stream-based model can yield an exponential improvement over the sample complexity of PAC learning (thus proving Theorem \ref{finitemu}).

We then consider learning via membership queries and show that in this setting even very naive algorithms outperform the disagreement methods in the stream-based model (that is, the analyst needs to ask fewer questions to the agent in order to learn his preference than the number of label requests he would need to make using disagreement methods). 

\subsection{Preliminaries}

For notational convenience, $\succsim_{\delta}$ will to refer to the preference relation in $\mathcal{P}_{\mathcal{ED}}$ with discounting factor $\delta$.

Let $\mu$ be a distribution on $\RR^T\times\RR^T$. $\mu$ induces a metric on $\mathcal{P}_{\mathcal{ED}}$ by $d(\succsim_{\delta}, \succsim_{\gamma}) = \mu(\succsim_{\delta}\triangle\succsim_{\gamma})$, and thus we can define the closed ball of radius $R$ centered at $\succsim_{\delta}$ by $$B(\succsim_{\delta}, R) = \{\succsim_{\gamma} : d(\succsim_{\delta}, \succsim_{\gamma})\le R\}.$$ For $V\subseteq\mathcal{P}_{\mathcal{EU}}$, the disagreement region of $V$, $\Dis(V)$ is defined by $$\Dis(V) = \{(x, y)\in\RR^T\times\RR^T : \exists\succsim_{\delta},\succsim_{\gamma}\in V\,\, s.t.\,\, (x, y)\in\,\succsim_{\delta}\triangle\succsim_{\gamma} \} = \displaystyle\bigcup_{\succsim_{\delta}, \succsim_{\gamma}\in V} (\succsim_{\delta}\triangle\succsim_{\gamma})$$ Intuitively, $\Dis(V)$ is the collection of points $(x, y)$ such that we can find two hypothesis relations in the current version space that rank $x$ and $y$ differently.

If $\succsim_{\delta}$ is the target preference relation, the disagreement coefficient of $\succsim_{\delta}$ with respect to $\mu$ is the quantity
$$\theta = \sup_{R > 0}\frac{\mu(\Dis(B(\succsim_{\delta}, R))}{R}$$

\subsection{Disagreement based active learning}

In this subsection, we define a distribution $\mu$ on $\RR^T\times\RR^T$ and show that the disagreement coefficient of $\mathcal{P}_{\mathcal{ED}}$ with respect to $\mu$ is $2$.

\subsubsection*{Choosing a measure}

The main challenge here is that $\theta$ depends on the underlying distribution over $\RR^T\times\RR^T$. Since preferences are polynomial inequalities, the disagreement coefficient seems to lend itself to a characterization involving polynomials and their roots, which is the motivation for our choice of distribution. For a general distribution $\mu$ over $\RR^T\times\RR^T$, it is not clear how to compute the disagreement coefficient. 

We show that $\theta = 2$ for a suitably chosen distribution on $\RR^T\times\RR^T$, which is induced by the Lebesgue measure on $(0, 1)^{T-1}$. This allows us to work with a measure on sets of roots of polynomials that arise from the definition of the preference relations.

Let $\mu^{**}$ be a measure on $(0, 1)^{T-1}$. We interchangeably represent elements of $\RR^{T}$ as polynomials $P$ of degree at most $T-1$ or as $T-1$-tuples of coefficients. Let $\sim$ be the equivalence relation on $\RR^T$ defined by $P\sim Q\iff P = cQ$ for some constant $c$, and let $\RR^T/\sim$ be the resulting quotient space. Let $g:(0, 1)^{T-1}\to\RR^T/\sim$ be the map taking a tuple of roots to the equivalence class of the polynomials with those roots, and let $h:\RR^T\times\RR^T\to\RR^T/\sim$ be the map $h(x, y) = [x-y]$. 

We define the following measures $\mu^*$ and $\mu$ on $g((0, 1)^{T-1})\subset\RR^T/\sim$ and $h^{-1}(g((0, 1)^{T-1}))\subset\RR^T\times\RR^T$ respectively. 

\begin{itemize}
	\item Define $\mu^*$ on all sets $S\subset g((0, 1)^{T-1})$ such that $$\{(r_1(P),\ldots, r_{T-1}(P))\in(0, 1)^{T-1} : P\in S\}$$ (where $r_1(P),\ldots r_{T-1}(P)$ denote the roots of $P$) is $\mu^{**}$-measurable, for which we set $$\mu^*(S) = \mu^{**}(\{(r_1(P),\ldots, r_{T-1}(P))\in(0, 1)^{T-1} : P\in S\}).$$ 
	\item Define $\mu$ on all sets $S\subset h^{-1}(g((0, 1)^{T-1}))$ such that $$\{[z]\in g((0, 1)^{T-1}) : \exists (x, y)\in S\, s.t. \, z \sim x-y\}$$ is $\mu^*$-measurable, for which we set $$\mu(S) = \mu^*(\{[z]\in g((0, 1)^{T-1}) : \exists (x, y)\in S\, s.t. \, z \sim x-y\}).$$
\end{itemize}

Intuitively, $\mu^*$ is defined only on those polynomials that have all their roots in $(0, 1)$. When $T=2$, this is a desirable property since the analyst is only presented with polynomials that have some disagreement in $(0, 1)$. He is not presented with meaningless polynomials that are, for example, always positive on $(0, 1)$ (the analyst has nothing to learn from such polynomials since such a polynomial will be preferred to $0$ for all $\delta\in (0, 1)$). For $T\ge 3$, this is a more restrictive property since the analyst is only presented with polynomials that have all $T-1$ roots in $(0, 1)$.


Let $\mu^{**}$ be the product Lebesgue measure on $(0, 1)^{T-1}$. Choosing $\mu^{**}$ in this fashion allows us to neatly characterize $B(\succsim_{\delta}, R)$.

Let $X_1,\ldots, X_{T-1}$ be uniform i.i.d. random variables on $(0, 1)$, and let $Y_{\delta, \gamma}$ be the random variable $Y_{\delta, \gamma} = |\{i : X_i\text{ is between }\delta\text{ and }\gamma\}|$. Let $E^{odd}_{\delta, \gamma}$ denote the event that $Y_{\delta, \gamma}$ is odd, let $E^k_{\delta, \gamma}$ denote the event $Y_{\delta, \gamma} = k$, and let $E^{\ge k}_{\delta, \gamma}$ denote the event $Y_{\delta, \gamma}\ge k$.

\begin{lem}\label{ball}$\succsim_{\gamma}\in B(\succsim_{\delta}, R)$ if and only if $\mathbb{P}[E^{odd}_{\delta, \gamma}]\le R$. \end{lem}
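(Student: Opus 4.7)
Unfolding the definition of $B(\succsim_\delta, R)$, we have $\succsim_\gamma \in B(\succsim_\delta, R)$ iff $d(\succsim_\delta, \succsim_\gamma) = \mu(\succsim_\delta \triangle \succsim_\gamma) \le R$. Hence the claim reduces to the identity
\[
\mu(\succsim_\delta \triangle \succsim_\gamma) \;=\; \mathbb{P}[E^{odd}_{\delta, \gamma}],
\]
and the whole proof amounts to translating the symmetric difference through the two pullbacks defining $\mu$.

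The first step is to rewrite the symmetric difference in terms of polynomial signs. For a pair $(x, y) \in \RR^T \times \RR^T$, let $P_{x,y}(t) = \sum_{k=1}^T (x_k - y_k) t^{k-1}$ be the polynomial whose coefficient vector is $x - y$, so that $x \succsim_\alpha y$ iff $P_{x,y}(\alpha) \ge 0$. Thus $(x,y) \in \succsim_\delta \triangle \succsim_\gamma$ iff $\sgn P_{x,y}(\delta) \ne \sgn P_{x,y}(\gamma)$. Crucially, whether the signs at $\delta$ and $\gamma$ differ depends only on the equivalence class $[x - y] = h(x,y) \in \RR^T/\sim$, since multiplying a polynomial by a nonzero scalar preserves the event.

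The second step pushes the event down to the root-tuple level. By the definition of $\mu$ via $h$ and of $\mu^*$ via $g$, we have
\[
\mu(\succsim_\delta \triangle \succsim_\gamma) \;=\; \mu^{**}\bigl(\{(r_1,\ldots, r_{T-1}) \in (0,1)^{T-1} : \sgn P(\delta) \ne \sgn P(\gamma)\}\bigr),
\]
where $P$ is any polynomial with roots $r_1, \ldots, r_{T-1}$. Writing $P(t) = c\prod_{i=1}^{T-1}(t - r_i)$,
\[
P(\delta)\,P(\gamma) \;=\; c^2 \prod_{i=1}^{T-1} (\delta - r_i)(\gamma - r_i),
\]
so the sign of $P(\delta)P(\gamma)$ equals $(-1)^N$, where $N := |\{i : r_i \text{ lies strictly between } \delta \text{ and } \gamma\}|$. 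Hence the signs at $\delta$ and $\gamma$ differ iff $N$ is odd. The exceptional set where $P(\delta) = 0$ or $P(\gamma) = 0$ corresponds to some $r_i$ equalling $\delta$ or $\gamma$, which has Lebesgue measure zero in $(0,1)^{T-1}$, so the tie-breaking convention in the definition of $\sgn$ is irrelevant.

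Finally, since $\mu^{**}$ is the product Lebesgue measure, the $\mu^{**}$-measure of $\{N \text{ odd}\}$ is exactly the probability, under i.i.d.\ uniform $X_1,\ldots, X_{T-1}$ on $(0,1)$, that $Y_{\delta,\gamma} = |\{i : X_i \text{ between } \delta \text{ and } \gamma\}|$ is odd, which is $\mathbb{P}[E^{odd}_{\delta, \gamma}]$. This gives the required identity and the lemma follows. The only real obstacle is the bookkeeping at the beginning: checking that the symmetric-difference event is well-defined on the quotient $\RR^T/\sim$ (so that it descends through $h$) and that the null-set issues coming from repeated roots or roots hitting $\{\delta, \gamma\}$ do not interfere; once this is settled, the rest is a direct sign computation.
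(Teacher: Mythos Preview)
Your proposal is correct and follows essentially the same approach as the paper: unwind the definitions of $\mu$, $\mu^*$, and $\mu^{**}$ to rewrite $\mu(\succsim_\delta \triangle \succsim_\gamma)$ as the Lebesgue measure of root tuples for which $\sgn\prod(\delta - r_i) \ne \sgn\prod(\gamma - r_i)$, and then observe that this sign difference occurs precisely when an odd number of $r_i$ lie between $\delta$ and $\gamma$ (modulo a null set). Your write-up is in fact more careful than the paper's, explicitly checking that the event descends to the quotient $\RR^T/\sim$ and handling the tie-breaking null sets.
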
 

\begin{proof}
Given $(x, y)\in\RR^T\times\RR^T$, let $P_{x-y}(X) = \sum_{t=1}^T X^{t-1}\cdot (x_t - y_t)$. Then, \begin{align*} \succsim_{\gamma}\in B(\succsim_{\delta}, R) &\iff \mu(\{(x, y)\in h^{-1}(g((0, 1)^{T-1})) : \sgn(P_{x-y}(\delta))\neq\sgn(P_{x-y}(\gamma)))\})\le R\\ &\iff \mu^*(\{[P]\in g((0, 1)^{T-1}) : \sgn(P(\delta))\neq\sgn(P(\gamma))\})\le R \\ &\iff \mu^{**}(\{(r_1,\ldots, r_{T-1})\in (0, 1)^{T-1} : \sgn(\textstyle\prod (\delta-r_i))\neq\sgn(\textstyle\prod (\gamma-r_i))\})\le R \end{align*}

But $\sgn(\textstyle\prod (\delta-r_i))\neq\sgn(\textstyle\prod (\gamma-r_i))$ occurs exactly when an odd number of roots lie between $\gamma$ and $\delta$ (modulo a set of measure $0$ since the probability that we have a root of multiplicity greater than $1$ is $0$). 
\end{proof}

While it seems difficult to write down a general characterization of $\mu$, Propositions \ref{mu*} and \ref{mu} give some basic observations regarding the $\sigma$-algebras on which $\mu^*$ and $\mu$ are defined. We defer their proofs (along with a description of $\mu^*$ in the case $T=2$) to the appendix:

\begin{prop}\label{mu*} The $\sigma$-algebra on $g((0, 1)^{T-1})$ is the Borel $\sigma$-algebra.\end{prop}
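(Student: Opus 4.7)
The plan is to identify $g((0,1)^{T-1})$ with a concrete open subset of Euclidean space via Vieta's formulas, show that $g$ descends to a homeomorphism from the quotient of $(0,1)^{T-1}$ by the coordinate-permutation action of the symmetric group $S_{T-1}$, and then match up Borel structures on both sides. Since every polynomial with roots in $(0,1)$ has degree exactly $T-1$ (no root equals $0$), each equivalence class in $g((0,1)^{T-1})$ lies in the affine chart of $\RR^T/\sim$ in which the leading coefficient is nonzero, so we may take the unique monic representative of each class. This identifies $g((0,1)^{T-1})$ with the set of monic polynomials of degree $T-1$ whose roots all lie in $(0,1)$---an open subset of coefficient space $\RR^{T-1}$---and thereby equips $g((0,1)^{T-1})$ with a natural Borel $\sigma$-algebra.

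Next, I would factor $g$ as $g = \tilde{g}\circ\pi$, where $\pi\colon (0,1)^{T-1}\to(0,1)^{T-1}/S_{T-1}$ is the quotient map for the coordinate-permutation action and $\tilde{g}$ is a bijection onto $g((0,1)^{T-1})$. The forward map $\tilde{g}$ is clearly continuous: by Vieta's formulas, the coefficients of $\prod_{i=1}^{T-1}(\delta - r_i)$ are, up to sign, the elementary symmetric polynomials $e_k(r_1,\ldots,r_{T-1})$, which are continuous and $S_{T-1}$-invariant functions of the roots.

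The main technical step is continuity of $\tilde{g}^{-1}$: the multiset of roots of a monic polynomial depends continuously on its coefficients. This is a classical fact, proved most cleanly via Rouch\'e's theorem, which guarantees that small perturbations of the coefficients confine each root to an arbitrarily small neighborhood (in the complex plane) of the corresponding root of the unperturbed polynomial; restricting to coefficient neighborhoods small enough to keep all roots inside $(0,1)$ then gives the statement. Granted this, $\tilde{g}$ is a homeomorphism.

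With $\tilde{g}$ a homeomorphism and $\pi$ continuous and open (the quotient by a finite group action is always open, since $\pi^{-1}(\pi(U)) = \bigcup_{\sigma\in S_{T-1}}\sigma(U)$ is open whenever $U$ is), Borel subsets of $g((0,1)^{T-1})$ correspond bijectively to $S_{T-1}$-invariant Borel subsets of $(0,1)^{T-1}$. Hence $S\subset g((0,1)^{T-1})$ lies in the $\sigma$-algebra defined in the setup---the class of those $S$ for which $\pi^{-1}(\tilde{g}^{-1}(S)) = \{(r_1(P),\ldots,r_{T-1}(P)) : P\in S\}$ is Borel in $(0,1)^{T-1}$---if and only if $S$ is Borel in $g((0,1)^{T-1})$, which is the claim.
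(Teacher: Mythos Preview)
Your argument is correct and takes a genuinely different route from the paper's. The paper extends $g$ to the closed cube, defining $g^*:[0,1]^{T-1}\to\RR^T/\sim$, observes that $g^*$ is continuous (same Vieta computation you use), and then invokes the fact that a continuous bijection from a compact space to a Hausdorff space is automatically a homeomorphism. Restricting back to $(0,1)^{T-1}$ gives the result. The payoff of the paper's approach is that it avoids any appeal to continuous dependence of roots on coefficients: compactness does the work that Rouch\'e does in your proof. Conversely, your approach buys you something the paper glosses over. The paper asserts that $g^*$ is injective, which is literally false---permuting the $r_i$ gives the same polynomial---so its ``bijection'' is really only a bijection after passing to the $S_{T-1}$-quotient, exactly as you do. Your explicit factorization $g=\tilde g\circ\pi$, together with the observation that $\pi$ is open, is the honest way to handle this, and it also makes transparent the final identification of the pushforward $\sigma$-algebra with the Borel $\sigma$-algebra (the existence of a Borel section of $\pi$, e.g.\ sorting the roots, would make your last paragraph airtight). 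Either argument can be patched into the other: one could run the compactness trick on $[0,1]^{T-1}/S_{T-1}$, or one could run your Rouch\'e argument without ever mentioning the closed cube.
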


The $\sigma$-algebra induced on $h^{-1}(g((0, 1)^{T-1}))$ does not appear to yield a clean characterization, but we can show the weaker statement that $\mu$ is a Borel measure, i.e. it is defined on all open sets of $h^{-1}(g((0, 1)^{T-1}))$.

\begin{prop}\label{mu} $\mu$ is a Borel measure on $h^{-1}(g((0, 1)^{T-1}))$. \end{prop}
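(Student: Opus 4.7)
By Proposition \ref{mu*}, $\mu^*$ is a Borel measure on $g((0,1)^{T-1})$, so my plan is to show that for every open set $U$ in the subspace topology on $h^{-1}(g((0,1)^{T-1}))$, the projected set $\{[z]\in g((0,1)^{T-1}) : \exists (x,y)\in U,\ z\sim x-y\}$ is open, hence Borel, in $g((0,1)^{T-1})$. Writing $U = V\cap h^{-1}(g((0,1)^{T-1}))$ for some open $V\subseteq \RR^T\times \RR^T$, a routine set-theoretic check gives $h(U) = h(V)\cap g((0,1)^{T-1})$, using the fact that any $(x,y)\in V$ whose image under $h$ already lies in $g((0,1)^{T-1})$ automatically satisfies $(x,y)\in h^{-1}(g((0,1)^{T-1}))$. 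So it suffices to prove that $h(V)\cap g((0,1)^{T-1})$ is an open subset of $g((0,1)^{T-1})$.

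The key step is to show that $h$ is an open map on the locus $\{(x,y):x\neq y\}$. I would factor $h = \pi\circ\phi$, where $\phi(x,y) = x-y$ and $\pi:\RR^T\setminus\{0\}\to\RR^T/\sim$ is the quotient map. The map $\phi$ is manifestly open, as each product ball $B_\varepsilon(x_0)\times B_\varepsilon(y_0)$ maps onto the open ball $B_{2\varepsilon}(x_0-y_0)$. For $\pi$, the plan is to take any open $W\subseteq\RR^T\setminus\{0\}$ and observe that its saturation
\[
\pi^{-1}(\pi(W)) \;=\; \bigcup_{c\in\RR\setminus\{0\}} cW
\]
is a union of homeomorphic images of $W$ under nonzero scaling, hence open; by the defining property of the quotient topology, this makes $\pi(W)$ open in $\RR^T/\sim$. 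Composing, $h$ sends open subsets of $\{x\neq y\}$ to open subsets of $\RR^T/\sim$.

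The only remaining issue is the diagonal $\{x=y\}$, on which $h$ is constantly $[0]$. Since every polynomial in $g((0,1)^{T-1})$ has exactly $T-1$ roots in $(0,1)$ and is therefore a nonzero polynomial of degree $T-1$, the class $[0]$ does not lie in $g((0,1)^{T-1})$. Hence $h(V)\cap g((0,1)^{T-1}) = h(V\cap\{x\neq y\})\cap g((0,1)^{T-1})$, and the previous paragraph makes this an open, and therefore Borel, subset of $g((0,1)^{T-1})$. By Proposition \ref{mu*} it is $\mu^*$-measurable, so $\mu(U)$ is defined, and $\mu$ qualifies as a Borel measure in the sense of the paper. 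I expect the main obstacle to be the openness of the quotient map $\pi$, which needs the explicit saturation computation above rather than a direct appeal to generic quotient-topology facts, since $\sim$ identifies scalar multiples rather than translates.
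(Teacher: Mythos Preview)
Your argument is correct and follows the same overall strategy as the paper: both reduce the claim to showing that $h$ carries open sets to open (hence Borel, hence $\mu^*$-measurable) sets. The execution, however, is different. The paper works directly in the image, picking $(x,y)\in V$, surrounding it by a small product box in $\RR^T\times\RR^T$, and observing that every nearby $[z']$ arises as $[(x+\lambda)-y]$ or $[x-(y+\lambda)]$ for some small vector $\lambda$, so a neighborhood of $[z]$ lies in $h(V)$. Your route instead factors $h=\pi\circ\phi$ and proves openness of each factor, the key being the saturation identity $\pi^{-1}(\pi(W))=\bigcup_{c\neq 0} cW$ for the scaling quotient. This buys you two things the paper leaves implicit: a clean justification of what ``open neighborhood of $[z]$'' means in the quotient topology (the paper's phrase ``open ball with radius $\lambda$ centered at $[z]$'' presupposes exactly the openness of $\pi$ that your saturation argument supplies), and an explicit handling of the subspace topology on $h^{-1}(g((0,1)^{T-1}))$ together with the diagonal $\{x=y\}$, neither of which the paper addresses. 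The paper's argument is shorter and more hands-on; yours is more structural and closes those small gaps without extra effort.
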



\subsubsection*{Computing $\theta$}

We now show $\theta = 2$ for the distribution $\mu$ as chosen above. We use the notation $d = d_{\delta, \gamma} := |\delta-\gamma|$ to denote the distance between $\delta$ and $\gamma$. Let $\succsim_{\delta}$ be the target preference relation.

First, note that since $Y_{\delta, \gamma}$ is distributed according to $\text{Bin}(T-1, d)$, $\PP[E^{odd}_{\delta, \gamma}] = \frac{1-(1-2d)^{T-1}}{2}$\footnote{This is due to the general fact that if $X$ is a random variable distributed according to $\text{Bin}(n, p)$, the probability that $X$ is odd is $\frac{1-(1-2p)^n}{2}$.}. We us this fact to derive an explicit description of the preference relations $\succsim_{\gamma}$ contained in the ball $B(\succsim_{\delta}, R)$ in terms of $d$. We break the analysis up into a few cases.

First, when $R > \frac{1}{2}$, we have $\sup_{R > \frac{1}{2}} \frac{\mu(\Dis(B(\succsim_{\delta}, R)))}{R} = 2$. Let $R\le\frac{1}{2}$. By Lemma \ref{ball}, \begin{equation}\succsim_{\gamma}\in B(\succsim_{\delta}, R)\iff \PP[E^{odd}_{\delta, \gamma}] = \frac{1-(1-2d)^{T-1}}{2}\le R\label{eq:ball}\end{equation} Suppose $d\le\frac{1}{2}$. Then $1-2d$ and $1-2R$ are both non-negative, so rearranging Equation \eqref{eq:ball} yields \begin{equation}d\le\frac{1-(1-2R)^{1/(T-1)}}{2}.\label{cond1}\end{equation} Suppose $d > \frac{1}{2}$, so $1-2d < 0$. Rearranging Equation \eqref{eq:ball}, we get $1-2R\le (1-2d)^{T-1}$. If $T-1$ is odd, $(1-2d)^{T-1}$ is negative, so $1-2R\le (1-2d)^{T-1}$ does not hold. Thus, when $T-1$ is odd the ball consists of $\succsim_{\gamma}$ such that $d$ satisfies condition \eqref{cond1}. If $T-1$ is even, $(1-2d)^{T-1}$ is positive, so we get \begin{equation}d\ge\frac{1+(1-2R)^{1/(T-1)}}{2}.\label{cond2}\end{equation} Thus, when $T-1$ is even the ball consists of $\succsim_{\gamma}$ such that $d$ satisfies conditions \eqref{cond1} or \eqref{cond2}.

Now, the disagreement region of $B(\succsim_{\delta}, R)$ consists of all points $(x, y)$ such that the polynomial $P_{x-y}$ (as defined in Lemma \ref{ball}) has a root $\gamma$ such that $\succsim_{\gamma}\in B(\succsim_{\delta}, R)$ (since we can find two hypotheses that disagree on $(x, y)$ by taking a point slightly below $\gamma$ and a point slightly above $\gamma$ such that $P_{x-y}$ has no sign changes in between). Hence, with $R_1 = \frac{1-(1-2R)^{1/(T-1)}}{2}$ and $R_2 = \frac{1+(1-2R)^{1/(T-1)}}{2}$, we have that $$\mu(\Dis(B(\succsim_{\delta}, R))) = \begin{cases}\PP[E_{\delta-R_1, \delta+R_1}^{\ge 1}] & \text{if } T-1\text{ is odd} \\ \PP[E_{\delta-R_1, \delta+R_1}^{\ge 1}\cup E_{0, \delta-R_2}^{\ge 1}\cup E_{\delta+R_2, 1}^{\ge 1}] & \text{if } T-1\text{ is even}\end{cases}$$

We have $$\PP[E_{\delta-R_1, \delta+R_1}^{\ge 1}] = 1-(1-2R_1)^{T-1} = 2R,$$ and $$\PP[E_{\delta-R_1, \delta+R_1}^{\ge 1}\cup E_{0, \delta-R_2}^{\ge 1}\cup E_{\delta+R_2, 1}^{\ge 1}] = 1 - (2(R_2 - R_1))^{T-1} = 1-2^{T-1}(1-2R).$$

Therefore, when $T-1$ is odd $\sup_{0 < R\le 1/2}\frac{\mu(\Dis(B(\succsim_{\delta}, R)))}{R} = 2$ and when $T-1$ is even $$\sup_{0 < R \le 1/2}\frac{\mu(\Dis(B(\succsim_{\delta}, R)))}{R} =\sup_{0 < R \le 1/2} \frac{1-2^{T-1}(1-2R)}{R} = 2,$$ which is achieved at $R = 1/2$ since $\frac{1-2^{T-1}(1-2R)}{R}$ is increasing on $0 < R\le\frac{1}{2}$.

Finally, $\theta = \sup_{R > 0}\frac{\mu(\Dis(B(\succsim_{\delta}, R)))}{R} = 2$.

We have thus established Theorem \ref{finitemu}: 

\begin{thmn}[\ref{finitemu}]
There exists a distribution $\mu$ on $\RR^T\times\RR^T$ for which the disagreement coefficient of $\mathcal{P}_{\mathcal{ED}}$ is $\theta = 2$. Thus, for this distribution, $$\ell_{CAL}(\varepsilon) = \widetilde{O}\left(\log T\log\frac{1}{\varepsilon}\right),$$ where the $\widetilde{O}$ notation suppresses terms that are logarithmic in $\log T$ and $\log 1/\varepsilon$.
\end{thmn}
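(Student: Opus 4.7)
The overall plan is to exploit the fact that, writing $P_{x-y}(\delta)=\sum_{t=1}^T(x_t-y_t)\delta^{t-1}$, the preference $x\succsim_\delta y$ is determined entirely by $\sgn(P_{x-y}(\delta))$. If I build a distribution $\mu$ whose support consists only of pairs $(x,y)$ for which $P_{x-y}$ has all $T-1$ of its roots in $(0,1)$, and I parametrize such polynomials by placing their roots according to the product Lebesgue measure on $(0,1)^{T-1}$, then geometric questions about $\mathcal{P}_{\mathcal{ED}}$ reduce to combinatorial questions about uniform order statistics. This is the motivation for the pushforward construction $\mu^{**}\mapsto\mu^*\mapsto\mu$ given above.

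With $\mu$ in hand, I would first apply Lemma \ref{ball} to translate membership in $B(\succsim_\delta,R)$ into a root-counting statement: since $\sgn P(\delta)\neq\sgn P(\gamma)$ exactly when an odd number of roots of $P$ lie between $\delta$ and $\gamma$, the ball condition becomes $\mathbb{P}[Y_{\delta,\gamma}\text{ odd}]\le R$. Using the standard identity $\mathbb{P}[\mathrm{Bin}(n,p)\text{ odd}]=(1-(1-2p)^n)/2$ with $n=T-1$ and $p=|\delta-\gamma|$, I would solve the resulting inequality explicitly for the admissible distances $d=|\delta-\gamma|$. The resulting description of the ball is an interval of half-width $R_1=(1-(1-2R)^{1/(T-1)})/2$ around $\delta$, together with (only when $T-1$ is even) a symmetric ``far'' component of $\gamma$'s near the reflection of $\delta$ across $1/2$, coming from the fact that $(1-2d)^{T-1}$ can be positive even when $d>1/2$.

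Next I would compute $\mu(\Dis(B(\succsim_\delta,R)))$. The key observation is that $(x,y)$ lies in the disagreement region precisely when $P_{x-y}$ has a root in the set of admissible $\gamma$ values, because taking two points on either side of such a root gives two hypotheses in the ball that disagree on $(x,y)$. By construction of $\mu$, the probability of having at least one root in a union of intervals of total length $\ell$ is $1-(1-\ell)^{T-1}$. Plugging in the lengths from the previous step, the $(T-1)$-th powers telescope: in the odd case one gets $\mu(\Dis(B))=2R$ exactly, so the ratio is identically $2$; in the even case one gets $1-2^{T-1}(1-2R)$, which is increasing in $R$ on $(0,1/2]$ and again hits ratio $2$ at $R=1/2$. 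Combined with the trivial bound $\mu(\Dis(B))/R\le 1/R\le 2$ for $R>1/2$, this gives $\theta=2$.

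Finally, I would invoke the CAL label complexity bound from Section 2.1 with $\theta=2$ and $d=VC(\mathcal{P}_{\mathcal{ED}})=\Theta(\log T)$ from Corollary \ref{Discount_bounds} to conclude $\ell_{CAL}(\varepsilon)=\widetilde{O}(\log T\,\log(1/\varepsilon))$. I expect the main obstacle to be the even-parity case: one must be careful not to miss the ``far'' component of the ball (which would undercount $\mu(\Dis)$ for small $R$) and, conversely, to verify that the ratio $\mu(\Dis)/R$ is genuinely maximized at $R=1/2$ and equals the odd-parity value there. A secondary technical nuisance is confirming that the sets whose measures appear above are actually $\mu$-measurable, which is precisely the content of Propositions \ref{mu*} and \ref{mu}.
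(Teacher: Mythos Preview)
Your proposal is correct and follows essentially the same route as the paper: the same pushforward measure $\mu$ built from product Lebesgue on $(0,1)^{T-1}$, the same reduction via Lemma~\ref{ball} and the binomial odd-parity identity to solve for the ball radius $R_1$ (and the additional ``far'' piece governed by $R_2$ when $T-1$ is even), and the same computation of $\mu(\Dis(B))$ via the at-least-one-root probability, yielding $2R$ in the odd case and $1-2^{T-1}(1-2R)$ in the even case. Your treatment of the $R>1/2$ regime and the final appeal to the CAL bound with $d=\Theta(\log T)$ also match the paper; the only cosmetic slip is describing the even-case far component as lying ``near the reflection of $\delta$ across $1/2$,'' whereas it is really the set $\{\gamma:|\delta-\gamma|\ge R_2\}$, but this does not affect the argument.
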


\subsection{Learning via membership queries}\label{sec:memqueries}
In this subsection, we present a simple membership queries algorithm that outperforms the guarantees provided by the disagreement based CAL algorithm. For simplicity, we restrict attention to preference models that are parametrized by a single parameter (e.g. $\mathcal{P}_{\mathcal{ED}}$ and $\mathcal{P}_{\mathcal{HD}}$).

Let $g_1,\ldots, g_T : \RR\to\RR$ be a collection of functions such that there exist $1\le t_1, t_2\le T$ satisfying
\begin{enumerate}
\item $M:=\sup_{\delta} \frac{g_{t_1}(\delta)}{g_{t_2}(\delta)}$ is finite, and
\item The map $\delta\mapsto\frac{g_{t_1}(\delta)}{g_{t_2}(\delta)}$ satisfies an inverse Lipschitz condition with constant $C$ : $$|\delta - \delta'|\le C \left|\frac{g_{t_1}(\delta)}{g_{t_2}(\delta)}-\frac{g_{t_1}(\delta')}{g_{t_2}(\delta')}\right|.$$
\end{enumerate}

Consider the model of preference relations $\mathcal{P}$ parametrized by $\delta$ such that $$x\succsim y \,\, \text{ if and only if } \sum_{t=1}^T g_t(\delta)x_t\ge\sum_{t=1}^T g_t(\delta)y_t.$$

\begin{propn}[\ref{MembershipQueries}]  There exists an algorithm that takes as input $\varepsilon > 0$ and using $O(\log 1/\varepsilon)$ membership queries outputs $\delta^h$ such that $|\delta-\delta^h|\le\varepsilon$, where $\delta$ parametrizes the target preference relation in $\mathcal{P}$. \end{propn}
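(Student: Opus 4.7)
The plan is to exploit the two distinguished coordinates $t_1, t_2$ to translate a membership query into a threshold test on the one-dimensional quantity $r(\delta) := g_{t_1}(\delta)/g_{t_2}(\delta)$, and then use the inverse Lipschitz condition to convert precision in $r$ into precision in $\delta$. First I would design a family of query pairs parametrized by $c \in \RR$: take $(x^c, y^c) \in \RR^T \times \RR^T$ with $x^c_t = y^c_t$ for all $t \notin \{t_1, t_2\}$, $x^c_{t_1} - y^c_{t_1} = 1$, and $x^c_{t_2} - y^c_{t_2} = -c$. Then
\[
\sum_{t=1}^T g_t(\delta)(x^c_t - y^c_t) \;=\; g_{t_1}(\delta) - c\, g_{t_2}(\delta),
\]
so the label of $(x^c, y^c)$ is exactly the sign of $g_{t_1}(\delta) - c\, g_{t_2}(\delta)$. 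Assuming $g_{t_2}$ has a fixed sign on the relevant parameter domain (otherwise one handles each sign regime separately by flipping the direction of the inequality), this single query reveals whether $r(\delta) \ge c$ or $r(\delta) < c$.

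Next, I would binary search on $r(\delta)$. By hypothesis $r$ is bounded above by $M$; a finite lower bound $m$ either comes from the problem setup or from a preliminary exponential search at thresholds $c = 0, -1, -2, -4, \ldots$, which terminates in $O(\log|r(\delta)|)$ steps and is absorbed into the final bound as long as $|r(\delta)|$ is at most polynomial in the problem parameters. Starting from the interval $[m, M]$ that contains $r(\delta)$, each membership query at $c = (a+b)/2$ halves the current interval $[a, b]$ known to contain $r(\delta)$. After $k = \lceil \log_2(C(M-m)/\varepsilon) \rceil = O(\log 1/\varepsilon)$ queries, the resulting interval $[a_k, b_k]$ has length at most $\varepsilon/C$.

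To finish, I would output any $\delta^h$ with $r(\delta^h) \in [a_k, b_k]$; such a value can be computed offline since $g_1, \ldots, g_T$ are known to the analyst. Condition 2 then gives
\[
|\delta - \delta^h| \;\le\; C\,|r(\delta) - r(\delta^h)| \;\le\; C \cdot \frac{\varepsilon}{C} \;=\; \varepsilon,
\]
which is the desired bound. The hard part will be the bookkeeping around the sign (and nonvanishing) of $g_{t_2}$ on the parameter domain, and ensuring that the initial search interval $[m, M]$ has length bounded independently of $\varepsilon$; the inverse Lipschitz condition is tailor-made to convert closeness in $r$-space into closeness in $\delta$-space, so the binary search itself is routine once the query reduction above is in place.
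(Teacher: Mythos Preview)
Your proposal is correct and is essentially the same argument as the paper's: the paper also queries pairs that differ only in coordinates $t_1$ and $t_2$ (specifically $(\rho e_{t_1}, b\, e_{t_2})$), binary-searches for the indifference value $b_\rho$ satisfying $g_{t_1}(\delta)\rho = g_{t_2}(\delta)b_\rho$ (equivalently, for $r(\delta)=b_\rho/\rho$), and then invokes the inverse Lipschitz condition exactly as you do. Your version is slightly more careful about the lower endpoint of the search interval and the sign of $g_{t_2}$, but the core mechanism is identical.
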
 

\begin{proof} Fix a $\rho > 0$ and an $\eta$-cover of $[0, M\rho]$, where $0 < \eta\le\frac{\rho\varepsilon}{C}$. Let $b_{\rho}$ be the quantity such that the agent is indifferent between receiving a payoff of $\rho$ at time $t_1$ or receiving a payoff of $b_{\rho}$ at time $t_2$, i.e. $b_{\rho}$ solves $$g_{t_2}(\delta)b_{\rho} = g_{t_1}(\delta)\rho.$$ By running a binary search over the $\eta$-cover of $[\rho, M\rho]$, the analyst can find an approximation $b_{\rho}^h$ to the indifference point for which $|b_{\rho} - b_{\rho}^{h}|\le\eta$ (the binary search is performed on the parameter $b_{\rho}^h$ by requesting labels for pairs of the form $(\rho e_{t_1}, b_{\rho}^h e_{t_2})$). The analyst then outputs the $\delta^h$ that solves $g_{t_2}(\delta^h)b_{\rho}^h = g_{t_1}(\delta^h)\rho$.

We have $$|\delta - \delta^h|\le C\left|\frac{g_{t_1}(\delta)}{g_{t_2}(\delta)}-\frac{g_{t_1}(\delta^h)}{g_{t_2}(\delta^h)}\right| = C\left| \frac{b_{\rho}}{\rho} - \frac{b_{\rho}^h}{\rho}\right|\le \frac{C\eta}{\rho}\le\varepsilon,$$ as desired.

Since $M:=\sup_{\delta} \frac{g_{t_1}(\delta)}{g_{t_2}(\delta)}$ is finite, $b_{\rho}\le M\rho$, so the binary search over the $\eta$-cover of $[0, M\rho]$ terminates.
\end{proof} 

\emph{Remark.} Outputting a hypothesis parameter $\delta^h$ that is $\varepsilon$-close to $\delta$ is a reasonable measurement for the error of learning via membership queries since there is no underlying distribution providing points to the analyst. However, note that for a distribution on $\RR^T\times\RR^T$, a hypothesis close to the target parameter implies the set of misclassified points is assigned a small measure, due to continuity of measure.

The main feature of this algorithm is that its query complexity has no dependence on the number of time periods $T$. Both $\mathcal{P}_{\mathcal{ED}}$ and $\mathcal{P}_{\mathcal{HD}}$ fit the conditions of Proposition \ref{MembershipQueries}, and thus we obtain a large improvement over the guarantees provided by disagreement methods in the stream-based model. Such methods assume no extra knowledge about the problem domain and are written to fit a wide class of learning problems. When we are learning economic parameters, membership queries allow us to take advantage of the extra structure present in preference models. 

\subsection*{Acknowledgements}
We would like to thank Federico Echenique and Adam Wierman for several helpful comments and suggestions.

\appendix

\section{Properties of $\mu$ and $\mu^*$}

\begin{proof}[Proof of Proposition \ref{mu*}]

Let $\mathcal{B}$ denote the Borel $\sigma$-algebra on $(0, 1)^{T-1}$.

Let $X = \mathbb{R}^T/\sim$ endowed with the quotient topology, and $g^*: [0,1]^{T-1} \to X$ be the map $g^*(r_1,\ldots, r_{T-1}) = [\prod (x-r_i)]$. Explicitly, the terms of $g^*(r_1,\ldots, r_{T-1})$ are given by symmetric sums: $$g^*(r_1,\ldots, r_{T-1}) = \left(c, -c\sum_{i} r_i, c\sum_{i, j} r_ir_j, \ldots,  (-1)^{T-1}cr_1\cdots r_{T-1}\right),$$ where $c$ is the appropriate constant for the representative of the equivalence class. Each symmetric sum is a continuous function of $T-1$ variables, so $g^*$ is continuous. Moreover, note that $g^*$ is injective. Then, with $Y = g^*([0, 1]^{T-1})$, we have that $g^*:[0, 1]^{T-1}\to Y$ is a continuous bijection from a compact set into a Hausdorff space. Hence, $g^*$ is a homeomorphism. Then $g$, which is the restriction of $g^*$ to $(0, 1)^{T-1}$ is a homeomorphism onto $Z:= g((0, 1)^{T-1})$. Thus, the $\sigma$-algebra $g(\mathcal{B})$ that we obtain on $Z$ is the Borel $\sigma$-algebra.
\end{proof}

When $T = 2$, we can give an explicit description of $\mu^*$. Identify $\RR^2/\sim$ with the unit circle. Then, a degree $1$ polynomial $P$ is identified with the point $(\cos\theta, \sin \theta)$, where $P(x) = (\cos\theta)x + \sin\theta$. $Z:= g((0, 1)^{T-1})$ consists of the boundary of the unit circle for which the argument $\theta$ satisfies $-\tan\theta\in (0, 1)$. This is satisfied precisely for $\theta\in (3\pi/4, \pi)\cup (7\pi/4, 2\pi)$. Hence, if $U$ is a basic open subset of $\{(\cos\theta, \sin\theta) : \theta\in(3\pi/4, \pi)\cup (7\pi/4, 2\pi)\}$, we can write $U = \{(\cos\theta,\sin\theta) : \theta_1 < \theta < \theta_2\}$ with $\theta_1,\theta_2$ both in the same segment of the unit circle and $$\mu^*(U) = \mu^{**}(\{-\tan\theta : \theta_1 < \theta < \theta_2\}) = |\tan\theta_1 - \tan\theta_2|.$$

\begin{proof}[Proof of Proposition \ref{mu}]
Let $h:\RR^T\times\RR^T\to\RR^T/\sim$ be the map $h(x, y) = [x-y]$. Let $V\subseteq h^{-1}(g((0, 1)^{T-1}))$ be open. We show that $$ h(V) = \{[z] : \exists (x, y)\in V (z = x-y)\}$$ is open. Indeed, let $z = x - y$ for $(x, y)\in V$ and choose $\varepsilon$ small enough such that the square with vertices $\{(x+\varepsilon, y+\varepsilon), (x+\varepsilon, y-\varepsilon), (x-\varepsilon, y+\varepsilon), (x-\varepsilon, y-\varepsilon)\}$ is contained in $V$. Then, for any $\lambda\le\varepsilon$, $[z+\lambda] = [(x+\lambda) - y]$ with $(x+\lambda, y)\in V$ and $[z-\lambda] = [x - (y+\lambda)]$ with $(x, y+\lambda)\in V$, so in particular the open ball with radius $\lambda$ centered at $[z]$ is contained in $h(V)$.
\end{proof}


\end{document}